\newcommand{\comment}[1]{}
\newtheorem{thm}{Theorem}
\newtheorem{lem}{Lemma}
\newtheorem{cor}{Corollary}
\newtheorem{prop}{Proposition}
\theoremstyle{definition}
\newtheorem*{defi}{Definition}
\newtheorem*{rem}{Remark}
\renewcommand{\proof}{\medskip\noindent\textit{Proof. }}
\newcommand{\phys}[1]{\footnote{#1}}
\newcommand{\q}[1]{{``#1''}}
\newcommand{\br}[1]{\left(#1\right)}
\newcommand{\brr}[1]{\left[#1\right]}
\newcommand{\bra}[1]{\left\langle #1 \right\rangle}
\newcommand{\set}[1]{\left\{#1\right\}}
\newcommand{\norm}[1]{\left\|#1\right\|}
\newcommand{\case}[1]{\begin{cases}#1\end{cases}}
\newcommand{\inv}{^{-1}}
\newcommand{\sumj}{\sum_{j=1}^m}
\newcommand{\restr}[2]{\left. #1 \right|_{#2}}
\newcommand{\EY}{\restr{E}{\partial X}}
\newcommand{\tin}{_{t\in [0,1]}}
\newcommand{\XgN}{_{X,\,g,\,N}} 
\newcommand{\XN}{_{X,\,N}} 
\newcommand{\matr}[1]{
\begin{pmatrix}
#1
\end{pmatrix}}
\newcommand{\sys}[1]
{\left\{
\begin{array}
#1
\end{array}
\right.}
\newcommand{\End}{\mathsf B}
\newcommand{\Herm}{\mathsf H}
\newcommand{\HF}{\mathsf{HF}}
\DeclareMathOperator{\GL}{GL}
\newcommand{\R}{\mathbb R}
\newcommand{\Z}{\mathbb Z}
\newcommand{\CC}{\mathbb C}
\newcommand{\B}{\mathcal B}
\newcommand{\DD}{\mathcal D}
\newcommand{\DB}{\DD\times\B}
\newcommand{\Dir}{\mathbb D}
\newcommand{\DG}{D}
\newcommand{\Dt}{\DG}
\newcommand{\DtB}{(\Dt,B)}
\DeclareMathOperator{\Ker}{Ker}
\DeclareMathOperator{\Hom}{Hom}
\DeclareMathOperator{\Diff}{Diff}
\DeclareMathOperator{\const}{const}
\DeclareMathOperator{\dom}{domain}
\DeclareMathOperator{\dvol}{dvol}
\DeclareMathOperator{\tg}{tan}
\DeclareMathOperator{\ctg}{cot}
\DeclareMathOperator{\spf}{\mathsf{sf}}
\renewcommand{\sp}[1]{\spf\,(#1)}
\renewcommand{\mod}{\,\mathrm{mod}\,}
\renewcommand{\d}{d}
\newcommand{\x}{x}
\newcommand{\n}{n}
\newcommand{\sqg}{\sqrt{g}}
\newcommand{\p}{\partial}
\newcommand{\px}[1]{\partial_1 #1}
\newcommand{\py}[1]{\partial_2 #1}
\newcommand{\new}{_{\mathrm{new}}}
\newcommand{\cl}{\mathrm{c}}
\renewcommand{\leq}{\leqslant} 
\renewcommand{\geq}{\geqslant}
\newcommand{\brop}[1]{\br{#1}} 
\newcommand{\addcontsec}[1]{\addcontentsline{toc}{section}{\protect\numberline {#1}}}
\begin{document}

\title{The spectral flow for Dirac operators on compact planar domains with local boundary conditions}
\author{Marina\,Prokhorova}
\date{}
 
\maketitle

\let\oldnumberline=\numberline
\renewcommand{\numberline}{\vspace{-4mm}\oldnumberline}
\addtocontents{toc}{\protect\renewcommand{\bfseries}{}}

\comment{
\begin{abstract}
Let $D_t$, $0\leqslant t \leqslant 1$ be an arbitrary 1-parameter family of Dirac type operators on a two-dimensional disk with $m-1$ holes.
Suppose that all operators $D_t$ have the same symbol, and that $D_1$ is conjugate to $D_0$ by a scalar gauge transformation.
Suppose that all operators $D_t$ are considered with the same elliptic local boundary condition.
Our main result is a computation of the spectral flow for such a family of operators.
The answer is obtained up to multiplication by an integer constant depending only on the number of the holes in the disk.
This constant is calculated explicitly for the case of the annulus ($m=2$).
\end{abstract}
}


\tableofcontents
\bigskip

\sloppy

\section{Introduction}

This paper deals with Dirac type operators on compact planar domains.
We consider such operators with 
self-adjoint locally elliptic local boundary conditions\footnote{In particular, 
boundary conditions defined by general pseudo-differential operators are not allowed.}.
The paper is focused not on individual operators, 
but on paths in the space of such operators.
We consider only paths connecting two operators conjugate by a scalar gauge transformation 
(so, they are loops up to a scalar gauge transformation).
Such paths have a well known invariant, the spectral flow, 
which counts with signs the number of eigenvalues passing through zero from the start of the path to its end 
(the eigenvalues passing from negative values to positive one are counted with the plus sign, 
and egenvalues passing in the other direction are counted with the minus sign). 
The paper is devoted to the problem of computation of the spectral flow
in the situation when all the operators along the path have the same symbol and the same boundary condition.

Because these results are potentially useful for the physics of condensed matter,
the author attempted to avoid advanced mathematical terminology 
and to explain the results and the ideas behind their proofs 
in a way accesible to non-mathematicians.
By the same reason, the author present the case of Dirac operators 
(Theorems \ref{thm:Dirac} and \ref{thm:2NDirac}) 
before dealing with the more general case of Dirac type operators on domains equipped with an arbitrary metric.
Note that physicists sometimes use more general boundary conditions than the ones considered in this paper. 
For example, the so-called armchair boundary conditions for graphen are of the type considered in this paper,
but the zigzag boundary conditions for graphen are not.
While it is not completely obvious, boundary conditions 
considered in this paper are just another forms of locally elliptic boundary conditions used in physics,
as explained in Section \ref{sec:phys}.
Besides, we explain in Section \ref{sec:phys} how the spectral flow can be computed in terms of 
general boundary problem for two-dimensional spinors considered by Akhmerov and Beenakker in \cite{AhmBeen}.
As illustration we show that the spectral flow vanishes if boundary condition does not break the time reversal symmetry. 

We start with the following situation.
Let $X$ be a compact planar domain bounded by $m$ smooth curves (topologically it is a disk with $m-1$ holes).
Our operators act on spinor (i.~e. spinor-valued) functions,
which we identify with column vectors of two complex-valued functions: 
$$
u=\matr{ u^+ \\ u^- }, \; u^{\pm}\colon X\to\CC.
$$
A Dirac operator acting on spinor functions has the form
$$
\DG = \Dir + \matr{ 0 & \bar{q}(\x) \\ q (\x) & 0}, \; 
\Dir = -i\matr{ 0 & \px{}-i\py{} \\ \px{}+i\py{} & 0},
$$
where $q$ is a smooth function from $X$ to $\CC$,
$x=(x^1,x^2)\in X$, $\p_i=\p/{\p x^i}$.
Our focus is on $1$-parameter families $D_t$ of such operators parametrized by $t\in [0,1]$.
In such a family the first term $\Dir$ involving derivatives is always the same, but
in the second term the function $q$ is allowed to continuously change with $t$, 
i.~e. $q=q_t$, where $t\in [0,1]$.
In agreement with the above, we assume that $D_1 = \mu  D_0 \mu \inv$ 
for some smooth scalar gauge transformation $\mu \colon X \to U(1)$.

\comment{
Let $D_t$ be a 1-parameter family of two-dimensional Dirac operators acting on a spinor (i.~e. spinor-valued) function $u$:
$$D_t= -i\matr{ 0 & \px{}-i\py{} \\ \px{}+i\py{} & 0} + \matr{ 0 & \overline{q_t}(\x) \\ q_t(\x) & 0}, \; 
u=\matr{ u^+ \\ u^- }, \; u^{\pm}\colon X\to\CC,$$
where $q_t(\x)$ is a smooth function from $X$ to $\CC$ continuously depending on $t\in [0,1]$ 
such that $D_1 = \mu  D_0 \mu \inv$ for some smooth gauge transformation $\mu \colon X \to U(1)$.
}

All operators $\Dt_t$ are considered with the same boundary condition of the form $i(\n_1 + i\n_2)u^+ =  B(\x) u^-$,
where $\n=(\n_1,\n_2)$ is the outward conormal to the boundary,
and $B$ is a real-valued smooth function on the boundary of $X$ without zeros. 
Our first main result, Theorem \ref{thm:Dirac}, 
asserts that the spectral flow of such a family of operators is equal to $c_m \sumj b_j \mu_j$,
where $c_m$ is an integer constant depending on $m$ only,
$\mu_j$ is the degree of the restriction of $\mu$ to $j$-th connected boundary component,
$b_j=1$ if $B$ is negative on the $j$-th boundary component, and equal to $0$ otherwise.

After considering this most special and very important situation, 
we turn our attention to the situation of Dirac operators acting on $N$-dimensional spinor functions
$$
u=\matr{ u^+ \\ u^- }, \; u^{\pm}\colon X\to\CC^N,
$$
where, as before, $X$ is a compact planar domain bounded by $m$ smooth curves.
A Dirac operator acting on $N$-dimensional spinor functions has the form
$$
\DG = \Dir+Q(\x), \; \Dir = -i \br{ \sigma_1 \px{} + \sigma_2 \py{} },
$$
where
$$
\sigma_1 =
\matr{
    0 & I_N \\
    I_N & 0
}, \;
\sigma_2 =
\matr{
    0 & -iI_N \\
    iI_N & 0
},
$$
$I_N$ is $N\times N$ unit matrix, 
and $Q(\x)$ is complex self-adjoint $2N\times 2N$ matrix smoothly dependent on $\x\in X$.
Again, our focus is on $1$-parameter families $D_t$ of such operators parametrized by $t\in [0,1]$.
In such a family the first term $\Dir$ involving derivatives is always the same, but
in the second term the matrix $Q$ is allowed to continuously change with $t$, 
i.~e. $Q=Q_t$, where $t\in [0,1]$.
We assume that $\DG_1 = \mu  \DG_0 \mu \inv$ 
for some smooth scalar gauge transformation $\mu \colon X \to U(1)$, 
where $U(1)$ is considered as the subgroup of $U(2N)$ consisting of the diagonal matrices with equal diagonal elements. 

All operators $\DG_t$ are considered with the same boundary condition $i(\n_1 + i\n_2)u^+ =  B(\x) u^-$,
where $B$ is a smooth map from the boundary to the space 
of complex self-adjoint invertible $N\times N$ matrices.
Note that a local boundary condition is locally elliptic if and only if 
it can be written in such a form with $B(\x)$ invertible at any $\x$;
this boundary condition is self-adjoint if and only if $B(\x)$ is self-adjoint at any $\x$.

Our second main result, Theorem \ref{thm:2NDirac}, asserts that the spectral flow  
of such a family of operators is equal to $c_m \sumj b_j \mu_j$,
where $c_m$ is the same constant as in Theorem \ref{thm:Dirac} 
(in particular, $c_m$ does not depend on the dimension $N$),
$\mu_j$ is the degree of the restriction of $\mu $ to $j$-th boundary component 
(this restriction gives us the map from the circle to the circle because $\mu$ is a scalar gauge transformation), 
and $b_j$ is the number of negative eigenvalues of $B(\x)$ (counting with multiplicities) on the $j$-th boundary component.

Theorem \ref{thm:gDirac} extends Theorem \ref{thm:2NDirac} to a still more general class of operators. 
These Dirac type operators involve in their definition an arbitrary (not necessarily flat) metric on $X$,
and have the principal symbol defined by a Clifford multiplication which does not necessarily agree with this metric.
While considering arbitrary metric is important for some physical applications, 
considering Clifford multiplication which does not agree with the metric on $X$ does not seem neseccary.
Nevertheless, we take care of this more general case because the proofs of our results crucially depend on its consideration.
Moreover, we cannot prove Theorems \ref{thm:Dirac} and \ref{thm:2NDirac} without proving Theorem \ref{thm:gDirac} first.

Notice that \textit{scalar} gauge transformations $\mu \colon X \to U(1)$ leave invariant every local boundary condition, 
as well as the first term $\Dir$ of the operator $\Dir+Q(\x)$: $\mu (\Dir+Q) \mu \inv = \Dir+Q'$ for some function $Q'(\x)$.
So any operator $D_0$ can be connected with the conjugate operator $\DG_1 = \mu  \DG_0 \mu \inv$ by the path $(\Dir+Q_t(\x))$ with fixed boundary condition. 
On the contrary, non-scalar gauge transformations $\mu \colon X \to U(2N)$ do not have such properties, 
so the problem of computation of the spectral flow can not be stated in such a form as described above.
If we allow general non-scalar gauge transformations, 
then we have to allow paths of operators $(D_t)$ and of boundary conditions $(B_t)$ 
with $B_t$ and symbol of $D_t$ being dependent on $t$. 
Some results about this more general case are outlined in Section \ref{elliptic}.

Note that in this paper the spectral flow is computed only up to multiplication 
by an integer constant $c_m$ depending only on $m$. 
For a disk with one hole ($m=2$) the eigenvectors and hence the spectral flow are calculated explicitly in a special case; 
this is sufficient to determine $c_2$; it turns out that $c_2=1$ (see Theorem \ref{thm:c2}).
For the case $m>2$ this method fails because Fourier transform gives no help here.
Nevertheless, the author expects that $c_m=1$ for all $m$; some reasons in favour of this conjecture are provided after proving that $c_2=1$.

\comment{
The exposition is arranged as follows.
In Section \ref{seq:sf} we recall the definition of the spectral flow.
The main result of the paper is stated in Section \ref{sec:gDO} (Theorem \ref{thm:gDirac}),
while Sections \ref{sec:DO} (Theorem \ref{thm:Dirac}) and
\ref{sec:2NDO} (Theorem \ref{thm:2NDirac}) contain the simpler
versions of this result. In Section \ref{sec:annulus} we
calculate the above mentioned multiplicative constant for a disk with one hole,
and in Section \ref{sec:manyholes} discuss this constant for disks with several holes.
In Section \ref{elliptic} we discuss further generalizations of these results.
Part \ref{part:2} contains the proof of the main result (Theorem \ref{thm:gDirac}). 
}

\part{The spectral flow for Dirac operators}\label{part:1}

\section{The spectral flow}\label{seq:sf}

Let $H$ be a complex separable Hilbert space,
$\br{A_t}$, $t \in [0,1]$ be a continuous 1-parameter family of bounded self-adjoint (or, what is the same, Hermitian) Fredholm operators in $H$.
Near zero every $A_t$ has discrete real spectrum, which changes continuously with the variation of $t$.
Hence one can count the net number of eigenvalues of $A_t$ passing through zero in positive direction 
as $t$ runs from $0$ to $1$,
that is, the difference between the numbers of eigenvalues (counting multiplicities)
crossing zero in positive and negative directions.
This net number is called the spectral flow $\sp{A_t}$.
The description of this notion can be found in \cite{APS-76, BBW-93}.

The case when $A_0$ or $A_1$ has zero eigenvalue requires some agreement on the counting procedure;
we use the following convention: take a small $\varepsilon>0$ such that $A_0$, $A_1$
have no eigenvalues in the interval $[-\varepsilon, 0)$, and
define the spectral flow as the net number of eigenvalues of
$A_t+\varepsilon I$ which pass through zero.

Let now $\br{A_t}$ be an 1-parameter family of (not necessarily bounded) self-adjoint Fredholm operators in $H$.
For example, it can be a family of symmetric elliptic differential operators $A_t$ acting on sections
of Hermitian bundle $E$ over closed (that is, compact without boundary) manifold $X$.
The definition of the spectral flow can be adjusted to this case, 
though more accurate consideration is needed,
particularly due to the presence of various natural topologies on the space of such operators \cite{BLP-04, L-04, BLP-02}.

When a manifold has non-empty boundary, we have to consider the family $\br{A_t, B_t}$,
where $A_t$ is a formally self-adjoint differential operator,
and $B_t$ is a ``good'' (self-adjoint elliptic) boundary condition for $A_t$ at any $t$.
One can see the notion of self-adjoint elliptic boundary value problem for operators of Dirac type in \cite{BBW-93, BL-01},
and for general first order elliptic operators in \cite{BBLZ-09}.

Such differential operator $A_t$ with boundary condition $B_t$ defines the unbounded self-adjoint Fredholm operator on $L^2(X,E)$,
which has unbounded discrete real spectrum.
Intuitively, the spectrum of $\br{A_t, B_t}$ changes continuously with the variation of $t$,
so the definition of the spectral flow works in this case as well \cite{BLP-04,BLP-02}.
However, the proof that the definition and the standard properties
of the spectral flow are correct is considerably more difficult in this case.
The crucial ingredient is the continuity (in $t$) of the family $\br{A_t,B_t}$ in the space of
unbounded self-adjoint Fredholm operators on $L^2(X,E)$ with an appropriate metric.
This was proved in \cite{BBLZ-09} (see \cite{BBLZ-09}, Theorem 7.16).
This continuity property allows one to use the theory developed in \cite{BLP-04, L-04} in full force.
Our proof of Theorem \ref{thm:gDirac} (see Part \ref{part:2}) crucially depends on this theory,
and, in particular, on Theorem 7.16 from \cite{BBLZ-09}.
The results of this theory needed for the proof of Theorem \ref{thm:gDirac}
are isolated in Section \ref{sec:P0-P4} as properties (P0-P4).

Note that if the spectra of $(A_0,B_0)$ and $(A_1,B_1)$ are the same (isospectral operators),
which is  the case in this paper, then there is another way to define the spectral flow of $(A_t,B_t)$.
The set
$$
\set{ (t,\lambda) \colon \lambda \mbox{ is the eigenvalue of } (A_t,B_t)}
$$
can be uniquely represented as the union of the graphs of
functions $\lambda_i(t)$ such that $\lambda_i(t) \leq
\lambda_j(t)$ for $i \leq j$. These functions give us the bijection
(one-to-one correspondence) of the spectrum of $(A_0,B_0)$ to the
spectrum of $(A_1,B_1)$. If these spectra coincide as subsets of $\R$
then this correspondence gives us the shift of the spectrum on the
integer number of positions. This number is the spectral flow of
$(A_t,B_t)$. It is worth to note that for the isospectral
case one can replace the level $\lambda=0$ by any real number, and
the difference between eigenvalues crossing the level in positive
and negative directions will be the same \cite{APS-76}.

\section{Dirac operators: the simplest case}\label{sec:DO}

Suppose $X$ is a compact planar domain bounded by $m$ smooth curves 
(topologically it is a disk with $m-1$ holes). 
We will use the notations $x=(x^1,x^2)\in X$, $\p_i=\p/{\p x^i}$.

Let us consider the Dirac operator on $X$
\begin{equation}\label{eq:Dirac}
\Dir= -i
\matr{
    0 & \px{}-i\py{} \\
    \px{}+i\py{} & 0
},
\end{equation}
acting on a spinor function $u \colon X \to \CC^2$, $u=\matr{ u^+ \\u^- }$.

A Dirac operator with non-zero vector potential has the form 
\begin{equation*}
\DG = \Dir + Q(\x), \mbox{ where }
Q(\x)=\matr
{   0 & \bar{q}(\x) \\
    q(\x)  & 0
},
\end{equation*}
$q$ is a smooth function from $X$ to $\CC$.

Let $\mu  \colon X \to U(1)$ be a gauge transformation;
we suppose that $\mu (\x) \in \CC$, $|\mu (\x)|\equiv 1$ for $\x\in X$.
Let us take a Dirac operator $\Dt_0 = \Dir+Q_0(\x)$ and connect it with the conjugate operator
\[
\Dt_1 = \mu \Dt_0 \mu \inv = \Dir+Q_0 +
    \matr{
    0 &  i \mu\inv\br{ \px{\mu } - i\py{\mu } } \\
    i \mu \inv\br{ \px{\mu } + i\py{\mu } }  & 0
}
\]
by an one-parameter family of Dirac operators
\begin{equation}\label{eq:Qt}
\Dt_t = \Dir +Q_t, \mbox{ where }
Q_t(\x)=\matr
{   0 & \overline{q}_t(\x) \\
    q_t(\x)  & 0
},
\end{equation}
$q_t$ is a smooth function from $X$ to $\CC$ continuously
depending on $t$, $t\in [0,1]$, 
$$q_1-q_0 = i\mu\inv\br{ \px{\mu } + i\py{\mu } }.$$

A self-adjoint locally elliptic\footnote{Another name for \q{locally elliptic boundary condition} is \q{Sapiro-Lopatinskii boundary condition}} 
local boundary condition for $\Dt_t$ has the form
\begin{equation}\label{eq:bc}
  in(x)u^+ = B(\x) u^- \mbox{ on } \p X,
\end{equation}
where $B \colon \partial X \to \R \setminus\set{0}$ is a smooth function defining the boundary condition,
$\n=\br{\n_1,\n_2}$ is the outward conormal to the boundary $\p X$ of $X$ at point $\x$, 
and we identify $n$ with the complex number $\n_1+i\n_2$ in \eqref{eq:bc}.

Note that $\n_1$, $\n_2$ coincide with the components of the outward normal to $\p X$
for the case of Euclidean metric considered both here and in the next section.
In Section \ref{sec:gDO} we consider a more general case of arbitrary metric on $X$, 
and the distinction between a normal and a conormal becomes essential there.

\begin{rem}\label{rem:connection}
$\Dir+Q(\x)$ is the Dirac operator on the trivial 2-dimensional complex vector bundle over $X$
with compatible unitary connection defined by the function $q(\x)$.
So the change of $q_t$ with $t$ is equivalent to the change of the connection.
\end{rem}

Boundary condition \eqref{eq:bc} is gauge invariant with respect to the conjugation by $\mu$,
while $\Dt_0$ and $\Dt_1$ are conjugate by $\mu$.
So the operators $\Dt_0$, $\Dt_1$  with the same boundary condition \eqref{eq:bc} are isospectral,
and the spectral flow of the family $\Dt_t$ gives us the shift of the spectrum of $\Dt_t$ when $t$ runs from $0$ to $1$.

\begin{thm}\label{thm:Dirac}
The spectral flow of the family $(\Dt_t)$, $t\in [0,1]$, with boundary condition \eqref{eq:bc} is equal to
\[
c_m \sumj b_j \mu_j,
\]
where
$c_m$ is the integer constant depending on $m$ only,
$\mu_j$ is the degree of the restriction of $\mu $ to $\p X_j$,
\[
b_j =
\sys{ {cc}
1,  \mbox{ if } B<0  \mbox{ on } \p X_j \\ 0, \mbox{ if } B>0  \mbox{ on } \p X_j
}
\]
Here $\p X_j$ are the connected components of the boundary of $X$,
equipped with the orientation in such a way that the pair
(outward normal to $\p X_j$, positive tangent vector to $\p X_j$) has the positive orientation on the plane $(x^1,x^2)$.
\end{thm}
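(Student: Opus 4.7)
The plan is to combine homotopy invariance, additivity of the spectral flow under concatenation, and a localization argument to reduce the formula to a small number of universal constants, which are then pinned down by symmetry and a direct calculation. Because the space of allowed perturbations $(Q_t)$ is affine and therefore contractible (any two paths from $D_0$ to $\mu D_0 \mu\inv$ can be joined by the straight-line homotopy $(1-s)D_t^a+sD_t^b$, whose endpoints $D_0$ and $\mu D_0\mu\inv$ are automatically preserved), and because the space of admissible $B$ is locally constant in its sign pattern on the boundary components, the spectral flow depends only on the class of $\mu$ in $[X, U(1)] = H^1(X, \Z)$ and on the vector of signs of $B$ along the components of $\p X$.

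To exploit linearity in the winding numbers $\mu_j$, I would argue that the assignment $\mu \mapsto \sp{D_t}$ is a group homomorphism $[X, U(1)] \to \Z$. Given $\mu = \mu'\mu''$, one concatenates a path realizing $\mu'$ with the conjugate by $\mu'$ of a path realizing $\mu''$; conjugation by a scalar gauge transformation preserves the spectrum, the boundary condition \eqref{eq:bc}, and the form $\Dir + Q$ of the operators (it adds only a zeroth-order off-diagonal term, as computed explicitly in the excerpt), so additivity of the spectral flow under concatenation yields $\sp{D_t^{\mu'\mu''}} = \sp{D_t^{\mu'}} + \sp{D_t^{\mu''}}$. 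Choosing generators $\mu^{(j)}$ of $[X, U(1)]$ whose winding is concentrated on $\p X_j$ then produces a linear formula $\sp{D_t} = \sumj \alpha_j\, \mu_j$ with integer coefficients $\alpha_j$ depending only on the sign pattern of $B$.

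To bring this to the claimed form, I would localize each generator $\mu^{(j)}$ to an arbitrarily thin collar of $\p X_j$ and deform both $Q_t$ and $B$ away from that collar, arguing that $\alpha_j$ depends only on the sign of $B$ on $\p X_j$ itself; the trivial case $\mu = 1$ forces $\alpha_j$ to vanish when $B>0$ on $\p X_j$, leaving $\alpha_j = c_j\, b_j$. A diffeomorphism of $X$ interchanging two boundary components (allowed in the broader framework of Theorem \ref{thm:gDirac}, where the metric on $X$ may also vary) transports the problem for $\p X_j$ onto that for $\p X_k$, so $c_j$ is independent of $j$. Calling the common value $c_m$ produces the asserted formula $\sp{D_t} = c_m \sumj b_j\, \mu_j$.

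The main obstacle is the localization step. To replace $\mu^{(j)}$ by a gauge transformation supported in a collar of $\p X_j$, and to modify $B$ and $Q_t$ on the rest of $X$ without altering the spectral flow, one must operate in a class of operators broad enough to accommodate deformations of the principal symbol and conformal changes of the metric on $X$; this goes beyond the rigid class of flat Dirac operators of Theorem \ref{thm:Dirac}. For this reason Theorem \ref{thm:Dirac} is not proved in isolation but is derived as a special case of the more general Theorem \ref{thm:gDirac}, whose proof occupies Part \ref{part:2}. The precise value of $c_m$, together with the sign convention under which $b_j = 1$ on negative-$B$ components rather than on positive ones, is fixed afterwards by the explicit Fourier computation on the annulus carried out in Section \ref{sec:annulus}.
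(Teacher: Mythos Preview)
Your proposal correctly identifies the overall architecture: Theorem \ref{thm:Dirac} is not proved in isolation but as the $N=1$, flat-metric case of Theorem \ref{thm:gDirac}, and your use of homotopy invariance in $(Q_t)$, additivity under concatenation to obtain linearity in $\mu$, and diffeomorphism invariance all appear in Part \ref{part:2}.

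There is, however, a genuine error in the step ``the trivial case $\mu = 1$ forces $\alpha_j$ to vanish when $B>0$ on $\p X_j$''. Taking $\mu\equiv 1$ makes every $\mu_j$ zero, so $\sum_j \alpha_j\mu_j = 0$ holds automatically for any choice of the $\alpha_j$; nothing is learned about individual coefficients. More to the point, the constraint $\sum_j\mu_j = 0$ means the $\alpha_j$ are only determined up to a common additive constant, so one cannot hope to show $\alpha_j = 0$ on positive-$B$ components this way. (That same constraint in fact rescues your conclusion: once localization and the diffeomorphism argument give $\alpha_j = \alpha(\mathrm{sign}\,B|_{\p X_j})$ independent of $j$, the identity $\sum_j\mu_j=0$ yields $\sum_j\alpha_j\mu_j = (\alpha(-)-\alpha(+))\sum_j b_j\mu_j$ directly, and the offending step can simply be deleted.)

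The more substantive divergence from the paper is the localization step itself. The paper never localizes $\mu$ to a collar of a single boundary component; instead it passes to $2N$-dimensional spinors and uses additivity of the spectral flow under direct sums to show that $F$ is \emph{bilinear} in $(\hat b,\hat\mu)$. A separate integration-by-parts and unique-continuation argument shows that when $B$ has the same sign on every component the operator $(\Dir+Q_t,B)$ has trivial kernel, whence $F(1,\hat 0,\hat\mu) = F(1,\hat 1,\hat\mu)=0$ for all $\hat\mu$. The resulting bilinear form on $M'\otimes M$ is $S_m$-invariant, and Schur's lemma (the permutation module $\{v\in\CC^m:\sum v_j=0\}$ being irreducible) forces $F = c_m\sum_j b_j\mu_j$. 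Thus the paper's mechanism is representation-theoretic rather than geometric; your localization-to-a-collar idea is plausible, but it is not what the proof of Theorem \ref{thm:gDirac} actually supplies, so deferring to that theorem does not close the gap as written.
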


\begin{figure}[tbh]
\begin{center}
\includegraphics[width=0.28\textwidth]{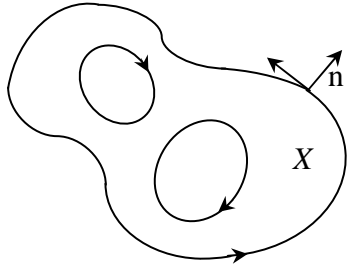}
\caption{The case of two holes}
\label{fig:X}
\end{center}
\end{figure}

Note that since $B \neq 0$, it has definite sign at each boundary component $\p X_j$, so the constants $b_j$ are correctly defined.
The restriction of $\mu $ to $j$-th connected component of $\p X$ gives us the map from the circle $\p X_j$ to the circle $U(1)$; $\mu_j$ is the degree of this map. 

\comment{The degree of this map is the number of revolutions of $\mu(\x)$ when $\x$ runs $\p X_j$ in the positive direction.
Here the revolutions in counterclockwise direction counts with the plus sign, and the revolutions in clockwise direction counts with the minus sign.}

This theorem follows from more general result which we formulate below. 
The generalization goes in two directions:
(1) we allow arbitrary dimension of unknown complex functions
$u^-$, $u^+$, (2) we replace Dirac operator by operators of more
general form.
The value of $c_2$ is computed in Section \ref{sec:annulus}.

\begin{rem}\label{rem:Berry}
Boundary condition \eqref{eq:bc} coincides with the boundary
condition of Berry and Mondragon for the \q{neutrino billiard}
\cite{Berry} up to replacement of $B$ by $B\inv$.
In physical terms, one-parameter family of Dirac operators \eqref{eq:Qt} describes the situation of 
continuously varying magnetic field so that the following two conditions are fullfiled:
\vspace{-5pt}
\begin{enumerate}
\parsep=0pt 
\itemsep=\parsep
	\item[(1)] magnetic field at $t=1$ coincides with magnetic field at $t=0$ all over the interior of $X$,
	\item[(2)] the fluxes through $j$-th hole at $t=1$ and at $t=0$ differ by integer number $\mu_j$ in the units of the flux quantum.
\end{enumerate}\vspace{-5pt}
Let $j=m$ corresponds to the outer boundary component and $j=1,\ldots, m-1$ enumerate the holes.
Considering that $\mu_m = -\sum_{j=1}^{m-1} \mu_j$, 
we can reformulate Theorem \ref{thm:Dirac} as following:
the spectral flow of the operators family \eqref{eq:Qt} with boundary condition \eqref{eq:bc} is equal to
\[
c_m \sum_{j=1}^{m-1} (b_j-b_m) \mu_j.
\]
Thus the variation of magnetic field through $j$-th hole contributes to the value of the spectral flow with coefficient $c_m(b_j-b_m)$.

If the signs of $B$ are the same on all boundary components, then the spectral flow is zero, 
no matter how magnetic field is varied 
(if only 
conditions (1-2) above are fullfiled). 
In the contrary, if $B$ takes positive values on some boundary component and negative values on another, then 
we can vary magnetic field so that the spectral flow does not vanish.
\end{rem}

\section{$2N$-dimensional Dirac operators}\label{sec:2NDO}

Let $X$ be as in the previous section.
The standard $2N$-dimensional Dirac operator has the form
\begin{equation}\label{eq:DO}
\Dir=-i \br{ \sigma_1 \px{} + \sigma_2 \py{} },
\; \mbox{ where }
\sigma_1 =
\matr{
    0 & I_N \\
    I_N & 0
}, \;
\sigma_2 =
\matr{
    0 & -iI_N \\
    iI_N & 0
},
\end{equation}
$I_N$ is $N\times N$ unit matrix.

We will consider operators of the form $\Dt = \Dir+Q(\x)$ acting on spinor functions 
\begin{equation}\label{eq:u}
u=\matr{ u^+ \\ u^- }, \; u^{\pm}\colon X\to\CC^N, 
\end{equation}
where $Q$ is a smooth map from $X$ to the space $\Herm(\CC^{2N})$ of complex self-adjoint (or, what is the same, Hermitian) $2N\times 2N$ matrices.

A self-adjoint local elliptic boundary condition for the operator $\Dir+Q$ has the form
\begin{equation}\label{eq:2Nbc}
    in(x)u^+ =  B(\x) u^- \; \mbox{ on } \p X,
\end{equation}
where $B$ is a smooth map from $\p X$ to the space of complex self-adjoint invertible $N\times N$ matrices,
$\n=\br{\n_1,\n_2}$ is the outward conormal to $\p X$ at point $\x$, 
and we identify $n$ with the complex number $\n_1+i\n_2$.

The equivalent way of imposing boundary condition \eqref{eq:2Nbc} is
\begin{equation}\label{eq:2N2bc}
    \br{ i\br{\n_1\sigma_1+\n_2\sigma_2} + \matr{B\inv & 0 \\ 0 & -B}} u = 0 \; \mbox{ on } \p X.
\end{equation}

\begin{thm}\label{thm:2NDirac}
Let $Q_t(\x)$ be a continuous 1-parameter family of self-adjoint $2N\times 2N$ matrices smoothly dependent on $\x\in X$ such that
$\Dir+Q_1 = \mu  \br{\Dir+Q_0}\mu \inv$ for some smooth gauge transformation $\mu \colon X \to U(1)$.
Let $B$ be a smooth map from $\p X$ to the space of complex self-adjoint invertible $N\times N$ matrices.
Then the spectral flow of the family $(\Dir+Q_t)$ with boundary condition \eqref{eq:2Nbc} is equal to
\[c_m \sumj b_j \mu_j,\]
where $c_m$ is the integer constant depending on $m$ only,
$\mu_j$ is the degree of the restriction of $\mu$ to the $j$-th connected component $\p X_j$ of the boundary,
$b_j$ is the number of negative eigenvalues of $B$ (counting multiplicities) on $\p X_j$ (this number is correctly defined due to nondegeneracy of $B$).
\end{thm}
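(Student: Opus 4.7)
The plan is to reduce Theorem~\ref{thm:2NDirac} to the scalar case (Theorem~\ref{thm:Dirac}) by a two-stage homotopy that preserves the spectral flow, and then to invoke additivity of the spectral flow under orthogonal direct sums. Throughout I rely on the homotopy invariance of the spectral flow for paths of self-adjoint elliptic boundary value problems that are continuous in the gap topology, which in this setting is the content of the properties P0--P4 recalled in Section~\ref{seq:sf} and ultimately of Theorem~7.16 of \cite{BBLZ-09}. As a first step I would deform the boundary symbol: since $B$ is a smooth map into the invertible Hermitian $N\times N$ matrices, its signature is locally constant and therefore equals $(N-b_j,\,b_j)$ on each connected component $\p X_j$. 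The space of invertible Hermitian matrices of fixed signature is path-connected, so there is a smooth homotopy $B^{(s)}$ from $B^{(0)}=B$ to a symbol $B^{(1)}$ whose restriction to $\p X_j$ is the constant diagonal matrix $\mathrm{diag}(-I_{b_j},\,I_{N-b_j})$. This homotopy does not affect the endpoint relation $\Dir+Q_1=\mu(\Dir+Q_0)\mu\inv$, so after this reduction I may assume $B$ is block-diagonal with entries $\pm 1$ on each $\p X_j$.

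Next I would deform $Q_t$ to a block-diagonal family. Because $\mu$ is scalar, $\mu Q\mu\inv=Q$ for every matrix $Q$, so the endpoint relation is equivalent to $Q_1-Q_0=P_\mu$, where
\[
P_\mu:=\mu\Dir\mu\inv-\Dir=i\mu\inv\br{\sigma_1\px{\mu}+\sigma_2\py{\mu}}.
\]
A direct check shows that $P_\mu$ is self-adjoint and, in the decomposition $u=(u_1,\ldots,u_N)$ with $u_k=(u^+_k,u^-_k)$, block-diagonal with $N$ identical $2\times 2$ blocks. Let $\pi$ denote the orthogonal projection of $\Herm(\CC^{2N})$ onto the corresponding block-diagonal subalgebra, and set $\tilde Q_t=\pi(Q_0)+t\,P_\mu$. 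Then $\tilde Q_t$ is smooth, self-adjoint and block-diagonal, and still satisfies $\tilde Q_1-\tilde Q_0=P_\mu$. The linear interpolation $Q^{(s)}_t=(1-s)Q_t+s\tilde Q_t$ is therefore a homotopy (in $s$) of $t$-paths each satisfying the same endpoint relation and the same boundary condition, so the spectral flow is constant in $s$ and I may replace $Q_t$ by $\tilde Q_t$.

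With these reductions, both $\Dir$ (which in the basis $(u_k)$ is the direct sum of $N$ copies of the scalar Dirac operator) and $\tilde Q_t$ are block-diagonal with $N$ two-dimensional blocks, and the boundary condition $i n u^+=B u^-$ splits accordingly into $N$ independent scalar boundary conditions with signs $\varepsilon^{(k)}_j\in\{\pm 1\}$ on $\p X_j$. Hence $(\Dir+\tilde Q_t,\,B)$ is the orthogonal direct sum of $N$ scalar families of the kind treated in Theorem~\ref{thm:Dirac}, all driven by the same gauge $\mu$. Additivity of the spectral flow and Theorem~\ref{thm:Dirac} then give
\[
\sp{\Dir+Q_t}=\sum_{k=1}^N c_m\sumj b^{(k)}_j\mu_j=c_m\sumj\br{\sum_{k=1}^N b^{(k)}_j}\mu_j=c_m\sumj b_j\mu_j,
\]
since $\sum_k b^{(k)}_j$ is the total number of negative entries of the diagonalized $B$ on $\p X_j$, which equals $b_j$.

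The main obstacle is justifying the homotopy invariance used in both reduction steps: one must check that the two-parameter family $(\Dir+Q^{(s)}_t,\,B^{(s)})$ is a continuous path of self-adjoint Fredholm realizations on $L^2(X,\CC^{2N})$ in the gap topology, which is precisely the delicate continuity statement built into the general theory of \cite{BBLZ-09,BLP-04,BLP-02}, and to handle the usual $\varepsilon$-convention at $t=0,1$ if zero eigenvalues appear as $s$ varies. A secondary concern is the apparent circularity with Theorem~\ref{thm:Dirac}: the block-diagonalization above runs uniformly in $N$, so the honest strategy is to prove the scalar ($N=1$) statement directly via Theorem~\ref{thm:gDirac} and then let this reduction handle all higher dimensions.
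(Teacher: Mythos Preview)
The paper gives no separate proof of Theorem~\ref{thm:2NDirac}: it is the specialization of Theorem~\ref{thm:gDirac} to the flat metric and $\Phi\equiv I_2$, and the proof of Theorem~\ref{thm:gDirac} in Part~\ref{part:2} is carried out uniformly in $N$. Your route---deform $B$ and $Q_t$ to block-diagonal form and split into $N$ scalar problems via (P3)---is a genuine alternative, and the combinatorics at the end is correct. But, as you yourself note, it is circular in the context of this paper: Theorem~\ref{thm:Dirac} is itself derived from Theorem~\ref{thm:gDirac}, and once you have Theorem~\ref{thm:gDirac} for $N=1$ you already have it for all $N$, so the block-diagonalization buys nothing. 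Your reduction would have independent value only if Theorem~\ref{thm:Dirac} admitted a direct proof avoiding the machinery of Part~\ref{part:2}; the paper supplies none.

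There is also a real gap in both deformation steps, and it is not the gap-continuity issue you flag. Property (P1) is homotopy invariance with \emph{fixed} endpoints; in your homotopies $s\mapsto(\Dir+Q^{(s)}_t,B^{(s)})$ the endpoints at $t=0,1$ move with $s$. ``Satisfying the same endpoint relation'' (conjugacy by $\mu$) does not by itself force the spectral flow to be constant in $s$. What is needed is precisely the rectangle argument of Section~\ref{sec:indep_D}: by (P1)--(P2) the spectral flow along the three-sided path equals that along the bottom side, and the contributions of the two $s$-sides cancel by (P4) because one is the $\mu$-conjugate of the other. Without invoking (P4) in this way the invariance you claim is unjustified. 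A smaller point: if you want to land in the scope of Theorem~\ref{thm:Dirac} rather than Theorem~\ref{thm:gDirac} at $N=1$, each scalar block of $\tilde Q_t$ must have the off-diagonal form $\bigl(\begin{smallmatrix}0&\bar q\\q&0\end{smallmatrix}\bigr)$, which $\pi(Q_0)$ need not have; taking $\tilde Q_t=tP_\mu$ instead fixes this.
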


This result is the corollary of Theorem \ref{thm:gDirac} from the following section.

\section{Dirac type operators}\label{sec:gDO}

Let $X$ be a compact planar domain bounded by $m$ smooth curves
and equipped with a Riemannian metric $g$ (which is not necessarily flat).

We call a first order formally self-adjoint operator $\Dt$ over $X$ a \textit{Dirac type operator} if its symbol has the form 
\begin{equation}\label{eq:sym_t}
\rho = \matr{ \rho_1 \\ \rho_2 } =
\Phi(\x) \matr{ \sigma_1 \\ \sigma_2 },
\end{equation}
where $\Phi$ is a smooth map from $X$ to the group $\GL^+(2,\R)$ of real invertible $2\times 2$ matrices with
positive determinant, and the matrices $\sigma_1$, $\sigma_2$ are defined by formula \eqref{eq:DO}.

In other words, Dirac type operator is the operator acting on spinor functions \eqref{eq:u} and having the following form:
\begin{equation}\label{eq:gDO}
\Dt = \Dt_{\Phi,\: Q} = -i \br{ \rho_1(x) \p_1 + \rho_2(x) \p_2 } + iR_{\Phi}(\x) + Q(\x),
\end{equation}
where $Q$ is a smooth map from $X$ to the space $\Herm(\CC^{2N})$ of complex self-adjoint $2N\times 2N$ matrices, 
$$R_{\Phi} = \frac{1}{2}\brr{ \br{ \rho_1\p_1 + \rho_2\p_2 } + \br{ \rho_1\p_1 + \rho_2\p_2 }^t }$$
(superscript $t$ denotes the operation of taking the formal adjoint operator).
More explicitely,
$$R_{\Phi}(\x) = -\frac{1}{2}\brr{ \px{\br{\sqg\rho_1}}+\py{\br{\sqg\rho_2}} } \in \Herm(\CC^{2N}),$$
where $\sqg =\sqrt{det(g_{ij})}$,
the matrix $(g_{ij})$ is inverse to the matrix $(g^{ij}) = (\bra{\d x^i, \d x^j}_g)$,
$\sqg\;\d x^1\d x^2$ is the volume element on $X$
(of course, $g_{ij}$, $g^{ij}$, and $\sqg$ depend on $x$).
\\

By $\DD=\DD\XgN$ we denote the space of all operators having the form \eqref{eq:gDO} for fixed $X$, $g$, $N$.
Note that Dirac type operator (that is an element of $\DD\XgN$)
is uniquely defined by the pair $(\Phi,Q)$.
\\

A self-adjoint elliptic local boundary condition for Dirac type operator \eqref{eq:gDO} has the form
\begin{equation}\label{eq:gbc}
    in'(x)u^+ =  B(\x) u^- \; \mbox{ on } \p X,
\end{equation}
where $B$ is a smooth map from $\p X$ to the space of complex self-adjoint invertible $N\times N$ matrices,
the complex-valued function $n'$ on $\p X$ is defined by the formula
$n' = n'_1 + in'_2$ with $(n'_1, n'_2) = (n_1, n_2)\Phi$ and
$n=(n_1,n_2)$ being the outward conormal to $\p X$ at $\x\in \p X$.
Recall that $\n_i=\sum g_{ij}\n^j$ for the components $\br{\n^j}$ of the normal to the boundary.

\begin{rem}
Equation \eqref{eq:gbc} is just another form of the equation
\begin{equation}\label{eq:gbc2}
i\rho^+(x,\n(x))u^+ =  B(\x) u^- \; \mbox{ on } \p X,	
\end{equation}
where 
\[
\rho(x,\xi) = 
\matr{0 & \rho^-(x,\xi) \\ \rho^+(x,\xi) & 0}
\]
denotes the symbol $\xi_1 \rho_1(x) + \xi_2 \rho_2(x)$ of the operator $D$ in the direction of a covector $\xi=(\xi_1,\xi_2)$.
Considering that in our case the operator $\rho^+(x,\xi)$ is scalar, and $\rho^+(x,n(x)) = n'(x)I_N$, 
boundary condition \eqref{eq:gbc2} may be written in simplified form \eqref{eq:gbc}.
\end{rem}

By $\B=\B\XN$ we denote the space of all smooth maps from $\p X$ to the space of complex self-adjoint invertible $N\times N$ matrices.
\\

Suppose $\Dt \in \DD$, $B\in \B$. We will write $(\Dt, B)$ for operator \eqref{eq:gDO} acting on the domain
$$\set{u\in C^1(X, \CC^{2N}) \colon \mbox{ restriction of } u \mbox{ to } \p X \mbox{ satisfies boundary condition \eqref{eq:gbc}} },$$
where $C^1(X, \CC^{2N})$ is the space of continuously differentiable functions from $X$ to $\CC^{2N}$.
\\

Such operators have the following properties:
\begin{enumerate}
    \item For any $\Dt \in \DD$, $B\in \B$ the operator $(\Dt, B)$ is (unbounded) essentially self-adjoint Fredholm operator,
    which has the discrete real spectrum. All its eigenvectors are smooth functions.
    (Lemma \ref{lem:1}, Section \ref{sec:prop})
    \item Suppose $Q_t(\x)$ is continuous on $(t, \x)$, $\Dt \in \DD$, $B\in \B$.
    Then all the operators from the family $\br{\Dt+Q_t,B}$ have the same domain,
    and this family is norm continuous in $L^2\br{X,g; \CC^{2N}}$.
    Therefore the spectral flow of the operators family $\br{\Dt+Q_t,B}$ is well defined (\cite{L-04}, Proposition 2.2).
\end{enumerate}

Now we can formulate the main result of the present paper:

\begin{thm}\label{thm:gDirac}
Let
$\Dt\in\DD$ be a Dirac type operator \eqref{eq:gDO},
$B\in\B$ define boundary condition \eqref{eq:gbc} for $\Dt$,
$Q_t(\x)$ be a continuous 1-parameter family of self-adjoint $2N\times 2N$ matrices smoothly dependent on $\x\in X$ such that
$\Dt+Q_1 = \mu  \br{\Dt+Q_0}\mu \inv$ for some smooth gauge transformation $\mu \colon X \to U(1)$.
Then
\[
\sp{\Dt+Q_t, B}\tin = c_m \sumj b_j \mu_j,
\]
where $c_m$ is the integer constant depending on $m$ only,
$b_j$ is the number of negative eigenvalues of $B$ (counting multiplicities) on $\p X_j$,
$\mu_j$ is the degree of the restriction of $\mu $ to $\p X_j$,
$\p X$ is oriented as described in the statement of Theorem \ref{thm:Dirac}.
\end{thm}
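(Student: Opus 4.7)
My plan is to reduce the theorem, via a sequence of homotopies grounded in properties (P0--P4), to a purely topological statement about the gauge transformation $\mu$ and the signature of $B$, and then to extract the linear formula from additivity together with a local calculation on each boundary component. Throughout, the spectral flow is a homotopy invariant of the family $(\Dt+Q_t,B)$ within the class of admissible paths (those with fixed endpoints up to scalar gauge conjugation by $\mu$), so I may freely deform $\Phi$, $Q_t$, and $B$ provided that the endpoints remain related by $\mu$ and that $B$ stays invertible.

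First, using that $\GL^+(2,\R)$ is connected and $\Herm(\CC^{2N})$ is convex, I would homotope $(\Phi,Q_0)$ to a standard model, simultaneously deforming $Q_t$ so that the endpoints remain conjugate by $\mu$. After this reduction only the gauge transformation $\mu\colon X\to U(1)$ and the boundary datum $B$ remain as topological invariants. Since $X$ is homotopy equivalent to a wedge of $m-1$ circles, the homotopy class of $\mu$ is classified by the boundary degrees $(\mu_1,\ldots,\mu_m)$, subject to $\sum_j\mu_j=0$. Additivity of the spectral flow under path concatenation, together with the group structure on homotopy classes of $\mu$, then implies that the spectral flow is a homomorphism $\Z^{m-1}\to\Z$, hence a linear expression $\sum_j a_j(B)\,\mu_j$, well-defined modulo adding a common additive constant to the $a_j$'s.

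Next, I would pin down the dependence on $B$. Deforming $B$ on each $\p X_j$ through the space of self-adjoint invertible matrices to a diagonal standard form with eigenvalues $\pm 1$, and exploiting additivity of the spectral flow under orthogonal direct sum (so that $B=B^{(+)}\oplus B^{(-)}$ yields a direct-sum decomposition of the whole boundary-value problem), one obtains $a_j(B)$ as a sum of per-eigenvalue contributions. Showing that a positive eigenvalue contributes $0$ and each negative eigenvalue contributes a common integer $c_m$ per unit of $\mu_j$ yields the formula $c_m\sum_j b_j\mu_j$. The fact that the \emph{same} integer $c_m$ appears on every boundary component -- including the topologically distinguished outer one -- can be bridged through the constraint $\sum_j\mu_j=0$: any a priori difference between the coefficient for $\p X_j$ and that for $\p X_m$ collapses into the freedom of adding a common constant, forcing the symmetric form above.

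\textbf{Main obstacle.} The hardest step will be the vanishing of the positive-eigenvalue contribution. A naive straight-line homotopy of $B^{(+)}$ to $I_N$ does not a priori preserve the spectral flow, because the family $(\Dt+Q_t,B)$ depends nontrivially on $B$ through the boundary condition \eqref{eq:gbc}. One must instead either construct an explicit reference path (for which the positive-block contribution is manifestly zero) and compare with it through an admissible homotopy of paths, or perform a direct eigenvalue analysis in a collar neighborhood of $\p X_j$ using the separated form of \eqref{eq:gbc}. A secondary difficulty is ensuring that the integer $c_m$ so obtained is independent of $N$ and of the choice of standard model for $\Dt$; the $N$-independence, in particular, should follow from the fact that increasing $N$ by one with a positive eigenvalue of $B$ added changes neither side of the formula, and the model-independence follows once the reductions of Step 1 are verified to preserve spectral flow by (P0--P4).
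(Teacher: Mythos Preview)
Your overall strategy closely parallels the paper's: reduce $\Phi$ and $Q_0$ to a standard model using (P0--P4), show the spectral flow depends only on the homotopy class $\hat\mu=(\mu_1,\ldots,\mu_m)$ of $\mu$ (linearly) and on the tuple $\hat b=(b_1,\ldots,b_m)$, and then extract the formula. However, there is a genuine gap in the final step.

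\medskip
\textbf{The gap.} After your reductions, what you actually obtain is a bilinear form
\[
F(\hat b,\hat\mu)=\sum_{j,k}\gamma_{jk}\,b_j\,\mu_k
\]
on $M'\times M$, where $M=\{\hat\mu\in\Z^m:\sum_j\mu_j=0\}$ and $M'=\Z^m/\langle(1,\ldots,1)\rangle$. The constraint $\sum_j\mu_j=0$ only tells you that the row vectors $(\gamma_{j1},\ldots,\gamma_{jm})$ are determined modulo $(1,\ldots,1)$; it does \emph{not} force $\gamma_{jk}=c_m\delta_{jk}$. For $m\geq 3$ the space of such bilinear forms has rank $(m-1)^2$, not $1$. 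Concretely, for $m=3$ the form $(b_1-b_2)(\mu_1-\mu_2)$ vanishes on $\hat b=(1,1,1)$ and is well-defined on $M'\times M$, yet it is not of the shape $c\sum_j b_j\mu_j$. So your sentence ``any a priori difference between the coefficient for $\partial X_j$ and that for $\partial X_m$ collapses into the freedom of adding a common constant, forcing the symmetric form above'' is simply false: the ambiguity by a common constant is the \emph{only} consequence of the relation $\sum\mu_j=0$, and it leaves an $(m-1)^2-1$-dimensional family of other possibilities.

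The paper closes this gap with an ingredient your proposal is missing: invariance under the action of the full symmetric group $S_m$, realized by orientation-preserving diffeomorphisms of $X$ permuting the boundary components. This makes $F$ an $S_m$-invariant element of $\Hom(M'\otimes M,\CC)\cong\Hom_{\CC}(V,V)$, where $V=M\otimes\CC$ is the standard irreducible representation; Schur's lemma then forces the one-dimensional answer. You need either this diffeomorphism argument or a substitute of comparable strength.

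\medskip
\textbf{A smaller point.} Your ``main obstacle'' (vanishing of the positive-eigenvalue contribution) is handled in the paper much more directly than the collar/reference-path mechanisms you suggest: for $N=1$ with $B$ of a fixed sign on all of $\partial X$, a one-line integration by parts shows that any kernel element of $(\Dir+Q_t,B)$ has $u^\pm\equiv 0$ on $\partial X$, and unique continuation for Dirac operators then gives $u\equiv 0$. Hence no zero crossings and, by (P0), the spectral flow vanishes. This is what yields $F(1,\hat 0,\hat\mu)=F(1,\hat 1,\hat\mu)=0$ and thereby the quotient by $\langle(1,\ldots,1)\rangle$ on the $\hat b$-side.
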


Note that constant $c_m$ in all the Theorems \ref{thm:Dirac}-\ref{thm:gDirac} is the same.


\begin{rem}
Let $S$ be a spinor bundle over $X$,
$\bra{\,\cdot\, ,\,\cdot\,}$ be an Hermitean metric on $S$ compatible with its spinor structure,
$\nabla$ be a connection on $S$ compatible with its spinor structure and the Levi-Civita connection on $TX$.
The Dirac operator on $S$ in local coordinates has the form 
$\Dt = \cl(v)\nabla_{v} + \cl(w)\nabla_{w}$, 
where $(v,w)$ is a positive oriented orthonormal basis in $T_x X$, 
and by $\cl(v)$ we denote the action of a tangent vector $v$ on spinors.

The unitary skew-adjoint isomorphism $J_x=\cl(v)\cl(w)$ of $S_x$ does not depend on the choice of a basis $(v,w)$ in $T_x X$ and defines the bundle decomposition $S = S^+\oplus S^-$, where $S^{\pm}$ are the subbundles of $S$ such that $S^{\pm}_x$ are the eigenspaces of $J_x$ corresponding to its eigenvalues $\mp i$. 
Due to the triviality of $TX$ and of any complex bundle over $X$, 
we can fix some global positive oriented orthonormal basis field $(v(x),w(x))$ in $TX$ and some trivialization of $S^-$.
Let us extend the trivialization from $S^-$ to $S$ so that the action of the tangent vectors on the spinors in this trivialization has the form
$$
\cl(v(x)) =
-i\matr{
    0 & I_N \\
    I_N & 0
}, \quad
\cl(w(x)) = -i\matr{
    0 & -iI_N \\
    iI_N & 0
}.
$$
Then sections $u$ of the spinor bundle $S$ can be identified with the column vectors \eqref{eq:u} 
of two functions $u^{\pm}\colon X\to\CC^N$, 
and Dirac operator $\Dt$ acting on such column vectors
can be written in the form
$\Dt = -i \br{\rho_1\nabla_1+\rho_2\nabla_2}$,
where $\rho_1$, $\rho_2$ are defined by formula \eqref{eq:sym_t}, 
$\Phi(\x)$ is the transition matrix:
$(v,w) = (e_1, e_2)\Phi(\x)$, and by $e_i$ we denote the vector (not the differential operator) $\p_i$ to avoid misunderstanding.

So any Dirac operator over $X$ has the form \eqref{eq:gDO} with $\Phi(\x)$ satisfying the condition 
$\Phi(\x)\Phi^{\ast}(\x)= (g^{ij}(\x))$ and with the matrix $Q(\x)$ of very special kind. 
While considering arbitrary metric $g$ and arbitrary $Q(\x)$ is important for some physical applications, 
considering Clifford multiplication which does not agree with the metric on $X$
(that is matrix function $\Phi(\x)$ which does not satisfy the condition $\Phi(\x)\Phi^{\ast}(\x)= (g^{ij}(\x))$)
does not seem neseccary.
Nevertheless, we take care of this more general case because the proofs of our results crucially depend on its consideration.
\end{rem}

\section{The case of one hole}\label{sec:annulus}

Here we compute the spectral flow for the case when $X$ has just one hole ($m=2$), 
and as a result find $c_2$.

\begin{thm}\label{thm:c2}
$c_2=1$.
\end{thm}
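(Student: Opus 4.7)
The plan is to compute the spectral flow directly in the simplest rotationally symmetric example and read $c_2$ off the answer. Take $X$ to be the round annulus $\set{r_1\leq|z|\leq r_2}\subset\CC$ with the flat metric, $\mu(z)=(z/|z|)^k$ for some integer $k\geq 1$, $\Dt_0=\Dir$, and $B$ piecewise constant with $B\equiv B_1<0$ on the inner boundary $\p X_1$ and $B\equiv B_2>0$ on the outer boundary $\p X_2$. With the orientation convention of Theorem~\ref{thm:Dirac} this gives $\mu_1=-k$, $\mu_2=k$, $b_1=1$, $b_2=0$, so $\sumj b_j\mu_j=-k$, and it is enough to exhibit one path $(\Dt_t,B)$ from $\Dt_0$ to $\Dt_1=\mu\Dt_0\mu\inv$ whose spectral flow equals $-k$.

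Since the space of operators $\Dir+Q$ with fixed symbol is affine, the spectral flow does not depend on the choice of path, and I may use the rotationally covariant path $\Dt_t=\Dir+t(\mu\Dir\mu\inv-\Dir)$. Decomposing
\[
u^+(r,\theta)=\sum_n f^+_n(r)e^{in\theta},\qquad u^-(r,\theta)=\sum_n f^-_n(r)e^{i(n+1)\theta},
\]
each Fourier pair $(n,n+1)$ decouples, and on it $\Dt_t$ acts as the one-dimensional radial Dirac operator
\[
D_{n,t}=-i\matr{0 & \p_r+(n+1-tk)/r \\ \p_r-(n-tk)/r & 0}
\]
on $[r_1,r_2]$ with the boundary conditions $if^+_n(r_2)=B_2f^-_n(r_2)$ and $-if^+_n(r_1)=B_1f^-_n(r_1)$ inherited from \eqref{eq:bc}. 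Hence $\spf(\Dt_t,B)\tin=\sum_n SF_n$, where $SF_n$ is the spectral flow of $(D_{n,t})\tin$.

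Now $D_{n,t}$ depends on $(n,t)$ only through the real parameter $s=n-tk$, so $SF_n=sf(n,n-k)$ with $sf(a,b)$ the spectral flow of $(D_s)$ as $s$ runs from $a$ to $b$. The equation $D_sf=0$ is solved explicitly by $f^+=C_2r^s$, $f^-=C_1r^{-(s+1)}$, and the two boundary conditions collapse to the single equation $(r_2/r_1)^{2s+1}=-B_2/B_1>0$, which has a unique real root $s^*$. Thus $D_s$ has a (one-dimensional) kernel exactly at $s=s^*$. A Feynman--Hellmann computation using this explicit zero mode, with $\p_sD_s=-i\matr{0 & 1/r \\ -1/r & 0}$, evaluates to
\[
\frac{d\lambda}{ds}\bigg|_{s=s^*}=\frac{2B_2|C_1|^2r_2^{-(2s^*+1)}\log(r_2/r_1)}{\|\psi\|^2}>0,
\]
so the eigenvalue crosses zero upward as $s$ increases through $s^*$.

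Consequently $SF_n=sf(n,n-k)=-1$ for exactly those $n$ with $s^*\in(n-k,n)$, of which there are $k$, and $SF_n=0$ otherwise. Summing, $\spf(\Dt_t,B)\tin=-k$; matching with Theorem~\ref{thm:Dirac} gives $c_2\cdot(-k)=-k$, hence $c_2=1$. The main technical point to check is that the mode-by-mode decomposition of the spectral flow is legitimate, which reduces to the uniform bound $|\lambda|\gtrsim|n|/r_2$ on the lowest-lying eigenvalue of $D_{n,t}$ for $|n|\gg|k|$, so that $SF_n=0$ outside a finite window of $n$; this follows from a standard estimate exploiting the centrifugal term in $D_{n,t}$.
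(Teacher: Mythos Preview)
Your argument is correct, and the approach is the same in spirit as the paper's---both separate variables on the annulus, reduce to a one-parameter family indexed by the Fourier mode, and read off $c_2$ from a single explicit computation---but the paper makes a cleverer choice that shortcuts most of your work.

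The key difference is the metric. By Theorem~\ref{thm:gDirac} the constant $c_m$ is independent of the metric $g$, so the paper takes the flat \emph{cylinder} metric $ds^2=dr^2+d\varphi^2$ on $\{1\leq r\leq 2\}$ rather than the Euclidean one. In these coordinates the Dirac type operator $-i\bigl(\begin{smallmatrix}0&\partial_r-i\partial_\varphi\\ \partial_r+i\partial_\varphi&0\end{smallmatrix}\bigr)$ has constant coefficients, and with $\mu=e^{i\varphi}$, $Q_t=\bigl(\begin{smallmatrix}0&it\\-it&0\end{smallmatrix}\bigr)$, $B=+1$ at $r=1$, $B=-1$ at $r=2$, the \emph{entire} spectrum is computed in closed form: the eigenvalues near zero are exactly $\lambda=t-k$, $k\in\Z$, each simple, with all other eigenvalues satisfying $|\lambda|\geq\pi$. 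One reads off $\spf=1$ by inspection and matches it against $c_2(b_1\mu_1+b_2\mu_2)=c_2$.

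Your choice of the Euclidean metric introduces the $1/r$ centrifugal terms, so the full spectrum of $D_{n,t}$ is no longer elementary. You compensate by (i) solving only for the zero modes, (ii) invoking a Feynman--Hellmann computation to fix the sign of the crossing, and (iii) appealing to a high-mode estimate to justify the termwise sum of spectral flows. All three steps are sound (the kernel is visibly one-dimensional at the unique $s^*$, your derivative computation is correct with the weight $r\,dr$, and the centrifugal bound is routine), but they are all avoided by the paper's change of metric. What your route buys is that it stays entirely within the Euclidean setting of Theorem~\ref{thm:Dirac}; what the paper's route buys is a two-line computation with no perturbative or asymptotic input.
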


\begin{proof}
By Theorem \ref{thm:gDirac}, the spectral flow does not depend on the geometry of $X$ and on the choice of $\Dt\in\DD$,
so it is sufficient to consider only the case when the computation is as simple as possible.
Let us take the annulus $X=\set{\br{r,\varphi}\colon 1\leq r \leq 2}$ in the polar coordinates $\br{r,\varphi}$ on the plane, with the metric $\d s^2 = \d r^2 + \d\varphi^2$, $N=1$,
\[
\Dt = -i \matr{0 & \p_r-i\p_{\varphi} \\ \p_r+i\p_{\varphi} & 0}, \quad
\mu = e^{i\varphi}, \quad
Q_t = \matr{0 & it \\ -it & 0},
B = \sys{ {ll} +1 \mbox{ at } r=1 \\ -1 \mbox{ at } r=2}
\]
We obtain the following system for the eigenvector $u$ and the eigenvalue $\lambda$ of $\br{\Dt+Q_t,B}$:
\[
\sys{l
(-i\p_r + \p_{\varphi} - it) u^+ = \lambda u^-  \\
(-i\p_r - \p_{\varphi} + it) u^- = \lambda u^+  \\
u^+ = iu^- \mbox{ at } r=1,2
}
\]
All the eigenvectors of $(\Dt+Q_t, B)$ are smooth functions, so we can seek them in the form
$u^{\pm}(r,\varphi) = \sum_{k\in \Z}u^{\pm}_k(r)e^{ik\varphi}$.
Substituting it in the last system, we obtain
\[
\sys{l
\p_r u^+_k - (k-t)u^+_k - i\lambda u^-_k = 0 \\
\p_r u^-_k + (k-t)u^-_k - i\lambda u^+_k = 0 \\
u^+_k = iu^-_k \mbox{ at } r=1,2
}
\]
Equivalently,
\[
\sys{l
\p_r \br{u^+_k +i u^-_k} = (k-t-\lambda)\br{u^+_k -i u^-_k} \\
\p_r \br{u^+_k -i u^-_k} = (k-t+\lambda)\br{u^+_k +i u^-_k} \\
u^+_k - iu^-_k = 0 \mbox{ at } r=1,2
}
\]
and $\p_r^2 \br{u^+_k - i u^-_k} = \br{(k-t)^2-\lambda^2}\br{u^+_k - i u^-_k}$.
So we have the following cases:
\begin{itemize}
    \item either $u^+_k = u^-_k \equiv 0$,
    \item or $k-t+\lambda=0$, $u^-_k = \const$, $u^+_k = i u^-_k$,
    \item or $(k-t)^2-\lambda^2 = -(\pi l)^2$, $l\in\Z\setminus\set{0}$, $u^+_k - iu^-_k = \const\cdot\br{ e^{\pi ilr} - e^{-\pi ilr)} }$.
\end{itemize}
Therefore the set
$$\Lambda = \set{ (t,\lambda) \colon \lambda \mbox{ is the eigenvalue of } (\Dt+Q_t,B)}$$
is the union of the set $\Lambda_1=\set{ (t,\lambda) \colon \lambda-t\in\Z }$ (with the multiplicities 1 of the eigenvalues)
and of the set $\Lambda_2$ lying beyond the band $\left| \lambda \right| \geq\pi$.

If $\lambda_j(t)$ are the continuous functions from the interval $[0,1]$ to $\R$ such that
$\Lambda\cap\set{0\leq t \leq 1}$ is the union of the graphs of functions $\lambda_j(t)$
and $\lambda_i(t) \leq \lambda_j(t)$ for $i \leq j$,
then $\lambda_j(t) = j+t$ when $-3\leq j \leq 2$ (up to shift of the numeration).
So
$$\sp{\Dt+Q_t,B}\tin = 1.$$

On the other hand, by Theorem \ref{thm:gDirac},
$$\sp{\Dt+Q_t,B}\tin = c_2 \br{b_1 \mu_1 + b_2 \mu_2} = c_2\mu_2 = c_2$$
where by $\partial X_1$, $\partial X_2$ we denote the inner and the outer boundary circles respectively. 
Therefore $c_2=1$.

\end{proof}

\section{The case of several holes}\label{sec:manyholes}

In this section we provide some evidence supporting the conjecture that $c_m=1$ for all $m$.

Namely, let us realize $X=X^h$ as $(m-1)$ identical annuli arranged along the line and connected by the band of the width $h$,
with the corners smoothed out, as on the Fig. \ref{fig:multi1}.

\begin{figure}[!h]
\begin{center}
\includegraphics[width=0.6\textwidth]{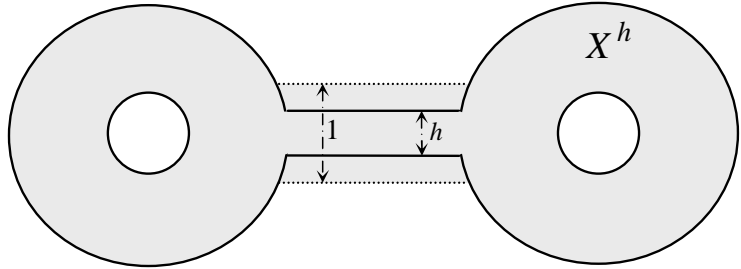}
\caption{Contracting of the connecting band}
\label{fig:multi1}
\end{center}
\end{figure}

Let us consider the process of continuous decreasing of the band's
width from $h=1$ to $h=0$; we suppose that the annuli do not change
in progress. Let us fix some function $\mu$ from $X^1$ to $U(1)$
and take $q_t = it\mu\inv\br{ \px{\mu }+ i\py{\mu } }$.
Restricting $\mu$ and $q_t$ on $X^h$, $0<h\leq 1$, we obtain
operator \eqref{eq:Qt} over $X^h$. Let us define the boundary
condition by $B^h=+1$ at the inner part $\cup_{j<m}\p X^h_j$ of
$\p X^h$ and $B^h=-1$ at the outer part $\p X^h_m$ of $\p X^h$.

By Theorem \ref{thm:Dirac}, $\sp{\Dir+Q_t} = c_m \sum_{j<m} \mu_j$ does not depend on $h$.
It is natural to suggest that the limit at $h\to +0$ of the (constant) spectral flow of the family $\br{\Dir+Q_t,B^h}$ for $X^h$
is equal to the spectral flow of $\br{\Dir+Q_t,B^0}$ for the \q{limit} domain $X^0$, which is the disjoint union of $m-1$ annuli,
and the ``limit'' boundary condition $B^0=+1$ at the inner boundary and $B^0=-1$ at the outer boundary of every annulus.

However, $\sp{\Dir+Q_t,B^0}$ for such union is the sum of $\sp{\Dir+Q_t,B^0}$ for the annuli,
and hence is equal to $\sum_{j<m} c_2\mu_j = c_2 \sum_{j<m} \mu_j$.

Therefore, if the assumption on the limit behavior of the spectral flow is true, then $c_m=c_2$ at any $m>2$.

\begin{figure}[!h]
\begin{center}
\includegraphics[width=0.8\textwidth]{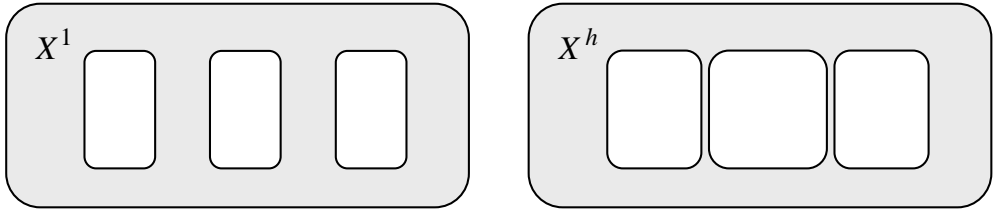}
\caption{Increasing of the holes}
\label{fig:multi2}
\end{center}
\end{figure}

Another way to have a look at the general case is to fix the outer boundary and to increase the holes up to their merging,
as on Fig. \ref{fig:multi2}.
Here we obtain the single annulus in the limit of $h=0$,
and the same result $c_m=1$ if the passage to the limit will be justified.

Alternatively, we can combine these two methods to obtain
arbitrary number $m'$, $1\leq m'\leq m-1$ of annuli in the end of
the limit process, with the same result for $c_m$.

\section{General case: first order elliptic operators}\label{elliptic}

The results of the present paper are concerned only with the case when $X$ is a disk with holes.
Easy modification of the proof gives us analogue of this result for the case of smooth compact oriented surface $X$ with nonempty boundary,
with the only change of $c_m$ to $c_{m,\,g}$, where $c_{m,\,g}$ is the integer constant
depending on the number $m$ of boundary components of $X$ and on the genus $g$ of $X$.
However, this still remains within the very restricted framework:
all the operators $\Dt_t$ are of Dirac type, both symbol of $\Dt_t$ and boundary condition do not depend on $t$,
conjugating gauge transformation is scalar.

In fact, this result can be extended to much more general case.
Namely, let $X$ be a smooth compact surface,
$(A_t)$ be an 1-parameter family of first order symmetric elliptic differential operators
acting on sections of unitary vector bundle $E$ over $X$,
and subbundle $L_t$ of $\EY$ defines a self-adjoint elliptic local boundary condition for $A_t$ at any $t$.
Suppose that $(A_1,L_1)$ is conjugate to $(A_0,L_0)$ by some gauge transformation $\mu$
(that is $\mu$ is unitary isomorphism of $E$, not necessarily scalar).
Then operators $(A_1,L_1)$, $(A_0,L_0)$ are isospectral,
and there arises the natural question about the spectral flow of the family $(A_t,L_t)$.
This question will be considered in a forthcoming paper by the author \cite{Prokh}. 
In that paper we will prove that
$$\sp{A_t, L_t}\tin = c_{m,\,g} \sumj \varphi_j,$$
where $c_{m,\,g}$ is the integer depending on the number $m$ of boundary components of $X$ and on the genus $g$ of $X$,
$\varphi_j$ is the integer determined in a canonical way by the restrictions on $j$-th boundary component of the following data:
\vspace{-5pt}
\begin{enumerate}
\parsep=0pt 
\itemsep=\parsep
    \item[(1)] family $(\rho_t)$, where $\rho_t$ is the symbol of $A_t$;
    \item[(2)] family $(L_t)$ of boundary conditions;
    \item[(3)] gauge transformation $\mu$.
\end{enumerate}
In particular, the spectral flow of $(A_t,L_t)$ does not depend on the choice of the operators in the interior of $X$ but only on the symbol of the operators on the boundary.

Theorem \ref{thm:gDirac} of present paper fits into this general result as follows: $c_m=c_{m,\,0}$, $\varphi_j = b_j \mu_j$.
Recall that $\mu_j$ is invariant of the restriction of $\mu$ to $j$-th boundary component of $X$,
$b_j$ is defined from the restrictions of boundary condition and of the operator's symbol on $j$-th boundary component.

\section{The spectral flow for $N=2$ in terms of the condensed matter physics}\label{sec:phys}

In this section we compare our boundary condition \eqref{eq:2Nbc} with 
\q{the general boundary conditions for the Dirac equation} for 4-dimensional Dirac operators formulated by Akhmerov and Beenakker in \cite{AhmBeen}. 
After that, we give some computations for the spectral flow in terms of \cite{AhmBeen}.
In particular, we show that the spectral flow vanishes in the case of time reversal symmetry 
(under the assumption of local ellipticity of the boundary problem).

In this section we will temporarily use the notations from \cite{AhmBeen} in their original form 
and will formulate our results in the same terms. 

The long-wavelength and low-energy electronic excitations in graphene 
considered in \cite{AhmBeen} 
are described by the Dirac equation $H\Psi=\varepsilon\Psi$ with Hamiltonian
\begin{equation}\label{eq:DirHam}
H=v\tau_0\otimes(\bm{\sigma}\cdot \bm{p}) 
\end{equation}
acting on a four-component spinor wave function $\Psi=(\Psi_A,\Psi_B)$
(in our notations, $\Psi$ is two-dimensional spinor function, $\Psi_A=u^+$, $\Psi_B=u^-$, $N=2$).
Here $v$ is the Fermi velocity, $\bm{p}=-i\hbar\nabla$ is the momentum operator,
$\bm{\sigma}\cdot \bm{p} = -i\hbar\br{\sigma_1\nabla_1 + \sigma_2 \nabla_2}$, 
matrices $\tau_i, \sigma_i$ are Pauli matrices in valley space and sublattice space, respectively:
\[
\sigma_0 = \matr{1 & 0 \\ 0 & 1}, \;
\sigma_1 = \matr{0 & 1 \\ 1 & 0}, \;
\sigma_2 = \matr{0 & -i \\ i & 0}, \;
\sigma_3 = \matr{1 & 0 \\ 0 & -1}, \;
\tau_i=\sigma_i.
\]
The general energy-independent boundary condition posed in \cite{AhmBeen} has the form
\begin{equation}\label{eq:MPsi}
\Psi = M \Psi \mbox{ on the boundary,}  
\end{equation}
where $M$ is a self-adjoint unitary $4\times 4$ matrix depending on the point $x\in \p X$ 
and anticommuting with the current operator 
$v\tau_0\otimes(\bm{\sigma} \cdot \bm{n}_B)$.
Here $\bm{n}_B$ is the outward normal to the boundary of $X$ at $x$ , 
so $\bm{n}_B = (n_1,n_2)$ in our previous notations, and
$\bm{\sigma}\cdot \bm{n}_B = n_1\sigma_1 + n_2\sigma_2$.

Let us compare \eqref{eq:MPsi} with our boundary condition \eqref{eq:2Nbc}.

At first note that the condition \q{$M$ is a self-adjoint unitary matrix anticommuting with the current operator} 
mean nothing but the condition of self-adjointness of the boundary problem \eqref{eq:MPsi}.
The authors of \cite{AhmBeen} do not require local ellipticity from the boundary condition; 
however, in the absence of local ellipticity the spectrum of the operator is not expected to be discrete. 
Boundary condition \eqref{eq:MPsi} is both locally elliptic and self-adjoint 
if and only if the matrix function $M(x)$ can be represented by the formula
\begin{equation*}
    M = I_{2N} - 2\matr{ I_N+B^2 & 0 \\ 0 & I_N+B^2 }\inv \matr{ I_N & i\bar{n}B \\ -in B & B^2 }
\end{equation*}
for some complex self-adjoint invertible $N\times N$ matrix function $B(\x)$. 
For such $M$ boundary condition \eqref{eq:MPsi} is equivalent to our boundary condition \eqref{eq:2N2bc}. 

The set of all possible self-adjoint unitary $4\times 4$ matrices anticommuting with the current operator 
is paramertrised in \cite{AhmBeen} by the following 4-parameter family: 
\begin{equation}\label{eq:M}
M = \sin\Lambda\;\tau_0 \otimes (\bm{n}_1\cdot\bm{\sigma}) + \cos\Lambda\;(\bm{\nu} \cdot \bm{\tau})\otimes (\bm{n}_2\cdot\bm{\sigma}),
\end{equation}
where $\Lambda(x)$ is a \q{mixing angle}, 
$\bm{\nu}(x)$, $\bm{n}_1(x)$, $\bm{n}_2(x)$ are unit vectors in $\R^3=\set{(x^1,x^2,x^3)}$ 
such that $\bm{n}_1$ and $\bm{n}_2$ are mutually orthogonal and also orthogonal to the boundary normal $\bm{n}_B(x)$,
$(\bm{\nu} \cdot \bm{\tau}) = \sum_{i=1}^3 {\nu^i \tau_i}$, and $(\bm{n}_j\cdot\bm{\sigma})$ are defined analogously.
Now we give the description of the ellipticity of the boundary problem \eqref{eq:MPsi}
in terms of $\Lambda$, $\bm{\nu}$, $\bm{n}_1$, $\bm{n}_2$, 
and compute $b_j$ as the functions of these parameters.

From now on we will suppose that the frame $(\bm{n}_B, \bm{n}_1, \bm{n}_2)$ is \textsl{positive oriented} in $\R^3$, that is its orientation coincides with the orientation of the frame $(e_1,e_2,e_3)$ of basis coordinate vectors.
This is possible because parameters 
$(\Lambda, \bm{n}_1, \bm{n}_2, \bm{\nu})$ and $(-\Lambda, -\bm{n}_1, \bm{n}_2, \bm{\nu})$ 
give us the same matrix $M$, so in a case of a negative oriented frame $(\bm{n}_B, \bm{n}_1, \bm{n}_2)$
we can change the signs of $\bm{n}_1$ and $\Lambda$ simultaneously to obtain the positive orientation of the frame.

Let $\varphi(x)$ be a function from the boundary to the circle $\R\mod 2\pi$ such that
$\bm{n}_2 = \sin\varphi\cdot\eta+\cos\varphi\cdot e_3$, 
where $e_3$ is the unit vector in $\R^3$ in the direction of $x^3$,
$\eta(x)$ is the unit tangent vector to the boundary at $x\in\p X$ 
such that the pair $(\bm{n}_B(x),\eta(x)$) has the positive orientation on the plane $(x^1,x^2)$.
Then $\bm{n}_1 = \cos\varphi\cdot\eta-\sin\varphi\cdot e_3$,  
and $M$ is determined by the triple $(\Lambda, \varphi, \bm{\nu})$.

\begin{prop}\label{prop:1}
Boundary condition \eqref{eq:MPsi} is locally elliptic for Dirac operator \eqref{eq:DirHam} if and only if 
$\Lambda+\varphi \neq 0 \,(\mathrm{mod}\,\pi)$ and $\Lambda-\varphi \neq 0 \,(\mathrm{mod}\,\pi)$ for any $x\in\p X$.
\end{prop}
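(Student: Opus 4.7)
\emph{Proof proposal.} The plan is to translate the boundary condition $\Psi = M\Psi$ into the standard form $inu^+ = Bu^-$ used in Section \ref{sec:2NDO}, and then to read off local ellipticity from the spectrum of an auxiliary $2\times 2$ matrix in valley space.

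First I would rewrite \eqref{eq:M} as a $2\times 2$ block matrix with respect to the sublattice decomposition $\Psi = (u^+, u^-)$, with each block a $2\times 2$ matrix acting on the valley space $\CC^2$. Using the formulas $\bm{n}_1 = \cos\varphi\cdot\eta - \sin\varphi\cdot e_3$ and $\bm{n}_2 = \sin\varphi\cdot\eta + \cos\varphi\cdot e_3$ together with the identity $\eta\cdot\bm{\sigma} = \bigl(\begin{smallmatrix} 0 & -i\bar n \\ in & 0 \end{smallmatrix}\bigr)$ (which follows from $\eta=(-n_2,n_1)$ and the definitions of $\sigma_1,\sigma_2$), a direct computation yields
\begin{equation*}
M = \begin{pmatrix} \beta & -i\bar n\,\alpha \\ in\,\alpha & -\beta \end{pmatrix},
\end{equation*}
where $\alpha = \cos\varphi\sin\Lambda\cdot I_2 + \sin\varphi\cos\Lambda\cdot(\bm{\nu}\cdot\bm{\tau})$ and $\beta = -\sin\varphi\sin\Lambda\cdot I_2 + \cos\varphi\cos\Lambda\cdot(\bm{\nu}\cdot\bm{\tau})$ are $2\times 2$ matrices on the valley space.

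Next, by the characterization recalled in Section \ref{sec:2NDO}, the boundary condition $(I-M)\Psi = 0$ is locally elliptic iff the subspace $L = \ker(I-M)\subset\CC^4$ projects isomorphically onto both sublattice factors, equivalently iff $L$ is transverse to both $\{u^+=0\}$ and $\{u^-=0\}$, equivalently iff it can be written as the graph $inu^+=Bu^-$ with $B$ invertible. Reading off $(I-M)\Psi=0$ block by block gives $(I_2-\beta)u^+ + i\bar n\,\alpha u^- = 0$ and $-in\,\alpha u^+ + (I_2+\beta)u^- = 0$. Hence a nontrivial element of $L$ with $u^-=0$ exists iff $\ker(I_2-\beta)\cap\ker\alpha \neq 0$, and one with $u^+=0$ exists iff $\ker(I_2+\beta)\cap\ker\alpha \neq 0$. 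Thus local ellipticity is equivalent to $\ker\alpha\cap\ker(I_2\mp\beta)=0$ for both signs.

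The key algebraic identity is $\alpha^2+\beta^2=I_2$, which follows from $(\bm{\nu}\cdot\bm{\tau})^2 = |\bm{\nu}|^2 I_2 = I_2$ together with cancellation of the cross-terms. Since $\alpha$ and $\beta$ both lie in the commutative algebra spanned by $I_2$ and $\bm{\nu}\cdot\bm{\tau}$, they are simultaneously diagonalizable, and on any common eigenvector $\alpha^2+\beta^2=1$. Therefore any vector in $\ker\alpha$ is automatically an eigenvector of $\beta$ with eigenvalue $\pm 1$, and the two conditions $\ker\alpha\cap\ker(I_2\mp\beta)=0$ collapse to the single condition $\ker\alpha=0$. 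In the common eigenbasis of $\bm{\nu}\cdot\bm{\tau}$ (whose eigenvalues are $\pm 1$ because $|\bm{\nu}|=1$ and $\bm{\nu}\cdot\bm{\tau}$ is traceless and squares to $I_2$), the eigenvalues of $\alpha$ are $\cos\varphi\sin\Lambda\pm\sin\varphi\cos\Lambda = \sin(\Lambda\pm\varphi)$. Thus $\alpha$ is invertible iff $\Lambda+\varphi\not\equiv 0\pmod\pi$ and $\Lambda-\varphi\not\equiv 0\pmod\pi$, which is the claim.

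The main point requiring care is the bookkeeping of the tensor product conventions in \eqref{eq:M}: one must consistently treat $\tau_i$ as acting on the valley $\CC^2$ internal to each $u^\pm$, while $\sigma_j$ acts on the sublattice splitting $\Psi=(u^+,u^-)$, so that the block form of $M$ above is correct. Once that identification is in place, the rest of the argument is elementary spectral calculus in the two-dimensional commutative algebra generated by $I_2$ and $\bm{\nu}\cdot\bm{\tau}$.
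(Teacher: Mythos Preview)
Your proof is correct and follows essentially the same route as the paper: the block form of $M$ you obtain is exactly the paper's $\bigl(\begin{smallmatrix} S_1 & -i\bar n S_2 \\ in S_2 & -S_1\end{smallmatrix}\bigr)$ with your $\alpha=S_2$ and $\beta=S_1$, and both proofs reduce local ellipticity to the invertibility of $S_2=\alpha$, whose eigenvalues are $\sin(\Lambda\pm\varphi)$. Your use of $\alpha^2+\beta^2=I_2$ to justify that the transversality conditions collapse to $\ker\alpha=0$ is a small elaboration of a step the paper states more briefly.
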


In other words, the boundary condition is not locally elliptic if and only if 
$\bm{n}_2 = \pm\sin\Lambda\,\eta \pm\cos\Lambda\, e_3$ for some $x\in\p X$ and for some combination of signs $\pm$.

\begin{prop}\label{prop:2}
If boundary condition \eqref{eq:MPsi} is locally elliptic for Dirac operator \eqref{eq:DirHam}
then it is equivalent to the boundary condition 
\[
in(x)\Psi_A = B(x)\Psi_B
\]
with the matrix $B(x)$ defined as following:
\[
B = \beta_+ P_+ + \beta_- P_-, \; \mbox{ where }
 P_{\pm} = \frac{\tau_0 \pm (\bm{\nu} \cdot \bm{\tau})}{2}, \;
\beta_+ = \ctg\frac{\Lambda+\varphi}{2}, \;
\beta_- = \tg\frac{\Lambda-\varphi}{2}.
\]
Here $\beta_{\pm}$ are the eigenvalues of $B$, 
$P_{\pm}$ are the ortogonal projections on the invariant subspaces of $B$ corresponding to the eigenvalues $\beta_{\pm}$.
\end{prop}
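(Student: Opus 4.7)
The plan is to diagonalize $M$ in the valley sector using the projectors $P_\pm$, which reduces the boundary condition $\Psi = M\Psi$ to a pair of independent $2\times 2$ problems in sublattice space. Since $\bm{\nu}\cdot\bm{\tau}$ is self-adjoint with eigenvalues $\pm 1$, the projectors $P_\pm = (\tau_0 \pm \bm{\nu}\cdot\bm{\tau})/2$ split $\Psi = \Psi_+ + \Psi_-$ with $\Psi_\pm = P_\pm\Psi$. Both summands in the definition \eqref{eq:M} of $M$ commute with $\bm{\nu}\cdot\bm{\tau}$ in the valley slot, so $M$ preserves this splitting and acts on each $P_\pm$-subspace as an operator in sublattice space alone,
\[
M_\pm = \sin\Lambda\,(\bm{n}_1\cdot\bm{\sigma}) \pm \cos\Lambda\,(\bm{n}_2\cdot\bm{\sigma}).
\]

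Substituting $\bm{n}_1 = \cos\varphi\,\eta - \sin\varphi\, e_3$ and $\bm{n}_2 = \sin\varphi\,\eta + \cos\varphi\, e_3$ and collecting terms by angle-sum identities gives
\[
M_+ = \sin(\Lambda+\varphi)\,(\eta\cdot\bm{\sigma}) + \cos(\Lambda+\varphi)\,\sigma_3, \qquad M_- = \sin(\Lambda-\varphi)\,(\eta\cdot\bm{\sigma}) - \cos(\Lambda-\varphi)\,\sigma_3.
\]
Using $\eta=(-n_2,n_1)$ and $n=n_1+in_2$, direct computation yields $\eta\cdot\bm{\sigma} = \matr{0 & -i\bar n \\ in & 0}$ with $n\bar n = 1$, so $M_\pm = \matr{b_\pm & -ia_\pm\bar n \\ ia_\pm n & -b_\pm}$ where $(a_+,b_+)=(\sin(\Lambda+\varphi),\cos(\Lambda+\varphi))$ and $(a_-,b_-)=(\sin(\Lambda-\varphi),-\cos(\Lambda-\varphi))$.

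The top row of $\Psi_\pm = M_\pm\Psi_\pm$ reads $(1-b_\pm)\Psi_{A,\pm} = -ia_\pm\bar n\,\Psi_{B,\pm}$. Multiplying by $in/(1-b_\pm)$ and using $n\bar n=1$ gives $in\,\Psi_{A,\pm} = \beta_\pm\,\Psi_{B,\pm}$ with $\beta_\pm = a_\pm/(1-b_\pm)$. The half-angle identities $\sin\theta/(1-\cos\theta)=\ctg(\theta/2)$ and $\sin\theta/(1+\cos\theta)=\tg(\theta/2)$ give exactly $\beta_+ = \ctg\frac{\Lambda+\varphi}{2}$ and $\beta_- = \tg\frac{\Lambda-\varphi}{2}$. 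The bottom row is redundant: after substituting the top-row relation, it collapses to $a_\pm^2 + b_\pm^2 = 1$, which is just $M_\pm^2 = I$. Summing over $\pm$ then produces the claimed equation $in\,\Psi_A = B\,\Psi_B$ with $B = \beta_+P_+ + \beta_-P_-$, and since each manipulation is reversible the two boundary conditions are equivalent.

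The step I expect to demand the most care is simply pinning down the orientation conventions for the tangent vector $\eta$, the mixing angle $\varphi$, and the frame $(\bm{n}_B,\bm{n}_1,\bm{n}_2)$, and then checking that the hypothesis of Proposition \ref{prop:1} is exactly what is needed here: the conditions $\Lambda\pm\varphi \not\equiv 0\,(\mathrm{mod}\,\pi)$ forbid both $1-b_\pm = 0$ (where $\beta_\pm$ would blow up) and the vanishing of $\beta_\pm$ itself, so that $B$ is a well-defined self-adjoint invertible matrix. Everything else is elementary trigonometry and one $2\times 2$ linear-algebra step.
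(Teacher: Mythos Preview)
Your proof is correct and reaches the same conclusion, but the order of operations differs from the paper's. The paper first writes $M$ in sublattice-block form $M=\smatr{S_1 & -i\bar n S_2\\ inS_2 & -S_1}$ with $S_1,S_2$ acting in valley space, solves the resulting $2\times 2$ system for $B=S_2^{-1}(I+S_1)$ as a valley-space matrix, and only then diagonalizes $B$ via $S=\bm{\nu}\cdot\bm{\tau}$; the simplification to $\beta_\pm=\ctg\frac{\Lambda+\varphi}{2},\ \tg\frac{\Lambda-\varphi}{2}$ then requires a chain of sum-to-product identities. You instead diagonalize in valley space \emph{first}, which reduces the problem to two scalar sublattice equations $\Psi_\pm=M_\pm\Psi_\pm$; the half-angle identities $\sin\theta/(1\mp\cos\theta)$ then yield $\beta_\pm$ in one line. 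Your route is a bit more transparent and avoids inverting $S_2$ and the subsequent trigonometric housekeeping, at the cost of having to verify the orientation of $\eta$ and check explicitly that the second row of $\Psi_\pm=M_\pm\Psi_\pm$ is redundant (both of which you do correctly). The two approaches are really the same computation done in opposite orders: ``solve, then diagonalize'' versus ``diagonalize, then solve''.
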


We prove these Propositions in the end of the section.

\begin{cor}\label{cor:1}
Let $Q_t(\x)$ be a continuous 1-parameter family of self-adjoint $4\times 4$ matrices smoothly dependent on $\x\in X$ such that $H+Q_1 = \mu  \br{H+Q_0}\mu \inv$ for some smooth gauge transformation $\mu \colon X \to U(1)$.
Suppose that boundary condition \eqref{eq:MPsi} is locally elliptic for Dirac operator \eqref{eq:DirHam}.
Then the spectral flow of the family $(H+Q_t)$ with this boundary condition is described by the formulae
\[
\sp{H+Q_t,M}\tin=c_m \sumj b_j \mu_j,
\]
where $c_m$, $\mu_j$ are as in Theorem \ref{thm:2NDirac}, 
$b_j$ depends only on the values of $\Lambda$, $\varphi$ on $j$-th boundary component:
\[
b_j=\case{
  0, \mbox{ if both }\Lambda+\varphi, \; \Lambda-\varphi \mbox{ belong to the interval } (0,\pi) \\
  2, \mbox{ if both }\Lambda+\varphi, \; \Lambda-\varphi \mbox{ belong to the interval } (\pi,2\pi) \\
  1, \mbox{ if one of }\Lambda+\varphi, \; \Lambda-\varphi \mbox{ belongs to the interval } (0,\pi) 
      \mbox{ and another to the interval } (\pi,2\pi)
 } 
\]
\end{cor}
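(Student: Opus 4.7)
The plan is to reduce the corollary to Theorem \ref{thm:2NDirac} via Proposition \ref{prop:2}, and then read off the invariants $b_j$ from a direct sign analysis of the eigenvalues of the boundary matrix $B$.

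First I would note that the Hamiltonian $H=v\tau_0\otimes(\bm\sigma\cdot\bm p)$ is a positive scalar multiple ($v\hbar$) of the standard $2N$-dimensional Dirac operator $\Dir$ from Section \ref{sec:2NDO} in the case $N=2$. Multiplication of a continuous family of self-adjoint Fredholm operators by a fixed positive constant preserves the signs of the eigenvalues and hence the spectral flow, so $\sp{H+Q_t,M}\tin = \sp{\Dir+(v\hbar)\inv Q_t,M}\tin$, and the latter family falls exactly into the framework of Theorem \ref{thm:2NDirac}. By Proposition \ref{prop:2}, local ellipticity of $M$ lets us rewrite the boundary condition $\Psi=M\Psi$ in the form \eqref{eq:2Nbc} with the smooth, self-adjoint, invertible $2\times 2$ matrix function $B(x)=\beta_+(x)P_+(x)+\beta_-(x)P_-(x)$. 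Theorem \ref{thm:2NDirac} then yields
\[
\sp{H+Q_t,M}\tin = c_m \sumj b_j \mu_j,
\]
with $b_j$ equal to the number of negative eigenvalues of $B$ on $\p X_j$.

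It remains to compute $b_j$. The projections $P_\pm$ are the spectral projections of $\bm\nu\cdot\bm\tau$, whose eigenvalues are $\pm 1$ with multiplicity one each; consequently $B(x)$ has exactly the two simple eigenvalues $\beta_+(x)=\ctg((\Lambda+\varphi)/2)$ and $\beta_-(x)=\tg((\Lambda-\varphi)/2)$, and the number of negative eigenvalues of $B(x)$ is the number of negative values among $\beta_\pm(x)$. A direct inspection of the signs of $\ctg$ and $\tg$ shows that $\beta_+(x)<0$ iff $\Lambda+\varphi\in(\pi,2\pi)\pmod{2\pi}$, and $\beta_-(x)<0$ iff $\Lambda-\varphi\in(\pi,2\pi)\pmod{2\pi}$. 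By Proposition \ref{prop:1}, local ellipticity forbids $\Lambda\pm\varphi$ from taking any value in $\pi\Z$; together with continuity and connectedness of $\p X_j$ this forces each of $\Lambda+\varphi$ and $\Lambda-\varphi$, taken modulo $2\pi$, to stay in a single one of the open arcs $(0,\pi)$ or $(\pi,2\pi)$ over the whole of $\p X_j$. Enumerating the four resulting combinations produces exactly the three-case formula for $b_j$ claimed in the corollary.

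I do not expect any substantive obstacle, since the real work has already been packaged in Propositions \ref{prop:1}--\ref{prop:2} and Theorem \ref{thm:2NDirac}. The only place where a little care is needed is in confirming, via local ellipticity together with connectedness of each component $\p X_j$, that the pair $(\mathrm{sign}\,\beta_+,\mathrm{sign}\,\beta_-)$ is constant along $\p X_j$, so that the integer $b_j$ is unambiguously defined and the case analysis above makes sense.
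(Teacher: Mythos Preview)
Your proposal is correct and follows exactly the route the paper takes: the paper's own proof is the single sentence ``This follows immediately from Theorem \ref{thm:2NDirac} and Proposition \ref{prop:2},'' and you have simply unpacked that sentence, including the sign analysis of $\beta_\pm$ and the connectedness argument that makes $b_j$ well defined.
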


\proof
This follows immediately from Theorem \ref{thm:2NDirac} and Proposition \ref{prop:2}.
\\

Let us inspect closer the case of time reversal symmetry.
The time reversal operator in the valley isotropic representation is
\begin{equation*}
T=-(\tau_2\otimes\sigma_2){\cal C},
\end{equation*}
with ${\cal C}$ the operator of complex conjugation \cite{AhmBeen}. 
The boundary condition preserves time reversal symmetry if $M$ commutes with $T$. 
This implies that the mixing angle $\Lambda\equiv 0$ \cite{AhmBeen}.
By Proposition \ref{prop:1}, in this case boundary problem \eqref{eq:MPsi} is locally elliptic if and only if 
$\bm{n}_2(x)$ is not vertical for all $x\in\p X$.
If this is fullfilled then 
Corollary \ref{cor:1} allows us to compute the spectral flow regardless of other parameters:

\begin{cor}[the case of time reversal symmetry]
\label{cor:2}
Let $Q_t(\x)$, $t\in [0,1]$ be a continuous 1-parameter family of self-adjoint $2N\times 2N$ matrices smoothly dependent on $\x\in X$ such that $H+Q_1 = \mu  \br{H+Q_0}\mu \inv$ for some smooth gauge transformation $\mu \colon X \to U(1)$.
Suppose that boundary condition is defined by formulas \eqref{eq:MPsi}, \eqref{eq:M} with $\Lambda\equiv 0$, 
and that for any $x\in\p X$ vector $\bm{n}_2(x)$ is not vertical.
Then the spectral flow of the family $(H+Q_t)$ is zero.
\end{cor}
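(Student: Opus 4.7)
The plan is to reduce Corollary \ref{cor:2} directly to Corollary \ref{cor:1} by showing that, under the time-reversal hypothesis, the sum $\sumj b_j \mu_j$ vanishes for purely topological reasons.

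First I would compute $b_j$ when $\Lambda \equiv 0$. In this case $\Lambda + \varphi = \varphi$ and $\Lambda - \varphi = -\varphi$. The local ellipticity hypothesis (Proposition \ref{prop:1}) then reads $\varphi(x) \not\equiv 0 \pmod{\pi}$ on each boundary component, so modulo $2\pi$ the value $\varphi(x)$ lies in $(0,\pi) \cup (\pi, 2\pi)$. Since $-\varphi \equiv 2\pi - \varphi \pmod{2\pi}$ always belongs to the complementary one of these two intervals, the third alternative in the formula for $b_j$ in Corollary \ref{cor:1} applies on every boundary component, giving $b_j = 1$ for every $j$.

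Next I would argue that $\sumj \mu_j = 0$. The crucial point is that the gauge transformation $\mu \colon X \to U(1)$ is defined on the whole of $X$, not merely on $\p X$. Hence $\mu\inv d\mu$ is a globally defined closed $1$-form on $X$ (the pullback of the Maurer--Cartan form of the abelian group $U(1)$). With the orientation convention for $\p X_j$ specified in Theorem \ref{thm:Dirac}, which is precisely the orientation induced on $\p X$ from $X$, Stokes' theorem gives
\[
\sumj \mu_j \;=\; \frac{1}{2\pi i} \int_{\p X} \mu\inv d\mu \;=\; \frac{1}{2\pi i} \int_X d(\mu\inv d\mu) \;=\; 0.
\]
Plugging both observations into Corollary \ref{cor:1} then yields $\sp{H + Q_t, M}\tin = c_m \sumj b_j \mu_j = c_m \sumj \mu_j = 0$, as required.

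There is no serious obstacle: the determination of $b_j$ is direct bookkeeping in the definitions of $\Lambda$ and $\varphi$, and the identity $\sumj \mu_j = 0$ is the familiar fact that the degrees of a $U(1)$-valued map around the boundary components of a planar domain with holes must sum to zero whenever the map extends to the interior. The only point worth verifying explicitly is that the orientation on $\p X_j$ used in Theorem \ref{thm:Dirac} coincides with the boundary orientation induced from $X$, which holds by construction.
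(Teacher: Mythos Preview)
Your proof is correct and follows essentially the same route as the paper's: use Corollary \ref{cor:1} to get $b_j=1$ for every $j$, and then invoke $\sumj \mu_j = 0$ to conclude. The paper's proof is terser---it simply asserts $b_j=1$ and writes $c_m\sumj\mu_j=0$ without comment---whereas you spell out both the case analysis for $b_j$ and the Stokes' theorem justification for the vanishing of $\sumj\mu_j$; these elaborations are correct and helpful but do not constitute a different strategy.
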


\proof
By Corollary \ref{cor:1}, $b_j=1$ for all $j$.
So we obtain 
$$\sp{H+Q_t}\tin = c_m \sumj b_j \mu_j = c_m \sumj \mu_j = 0.$$

\begin{rem}
Corollary \ref{cor:2} can be proved by other means as well, without use of Corollary \ref{cor:1} and of formula \eqref{eq:M} but using Theorem \ref{thm:2NDirac} directly.
Namely, let $M\Psi=\Psi$ be locally elliptic boundary condition for Dirac operator \eqref{eq:DirHam} such that $TMT\inv=M$.
At first note that the spectral flow of the family $(H+Q_t)$ is independent of the choice of connection $\nabla$, 
so we can assume that $\nabla_i = \p_i$. With this choice of connection, we have $THT\inv=H$. 
Let $Q'_t(x)=TQ_t(x)T\inv$, then  
\[
Q'_1-Q'_0 = T(Q_1-Q_0)T\inv = T(\mu H\mu\inv-H)T\inv = \mu\inv THT\inv \mu - THT\inv = \mu' H\mu'^{-1}-H,
\]
where $\mu'=\mu\inv$.
By Theorem \ref{thm:2NDirac}, 
\[
\sp{H+Q'_t,M} = c_m \sum b_j \mu'_j = -c_m \sum b_j \mu_j = -\sp{H+Q_t,M},
\]
where by $\sp{H+Q_t,M}$ we denote the spectral flow of the family $(H+Q_t)$ with boundary condition $M\Psi=\Psi$.
If $M$ commutes with $T$ then
\begin{equation}\label{eq:conj}
\sp{T(H+Q_t)T\inv, TMT\inv} = \sp{H+Q'_t,M} = -\sp{H+Q_t,M}.
\end{equation}
In Section \ref{sec:P0-P4} we prove the conjugacy invariance of the spectral flow 
under unitary isomorphisms of $L^2(X,g; \CC^{2N})$.
Even though $T$ is \textit{antilinear} isomorphism of the Hilbert space $L^2(X,g; \CC^4)$, 
the spectral flow still remains invariant under conjugation by $T$. 
This can be proved using the uniqueness property of the spectral flow in the same manner 
as Property (P4) in Section \ref{sec:P0-P4}, 
taking into account that conjugation by $T$ preserves self-adjointness of operators in $L^2(X,g; \CC^4)$.
Thus $\sp{T(H+Q_t)T\inv, TMT\inv}$ coincides with $\sp{H+Q_t,M}$. 
Together with \eqref{eq:conj}, this imply $\sp{H+Q_t,M} = 0$.
\end{rem}

\noindent\textit{Proof of Proposition \ref{prop:1}.}
By $S=S(x)$ we denote the matrix $(\bm{\nu} \cdot \bm{\tau}) = \sum_{i=1}^3 {\nu^i \tau_i}$.
In our notations,
\begin{multline*}
M = 
\sin\Lambda\cdot \tau_0 \otimes \matr{-\sin\varphi & -i\bar{n}\cos\varphi \\ in\cos\varphi & \sin\varphi} 
+ \cos\Lambda\cdot S \otimes \matr{\cos\varphi & -i\bar{n}\sin\varphi \\ in\sin\varphi & -\cos\varphi} = \\
= \matr{S_1 & -i\bar{n}S_2 \\ inS_2 & -S_1}, 
\end{multline*}
where 
$S_1 = -\sin\varphi\sin\Lambda\cdot I + \cos\varphi\cos\Lambda\cdot S$, 
$S_2 = \cos\varphi\sin\Lambda\cdot I + \sin\varphi\cos\Lambda\cdot S$,
$I=I_2$ is $2\times 2$ identity matrix.
Note that $S^2=I$ for any $\bm{\nu}$, $S_1^2+S_2^2=I$ for any $\bm{\nu}$, $\varphi$, $\Lambda$.

Boundary condition $M\Psi=\Psi$ is equivalent to the following system:
\begin{equation}\label{eq:bcS}
\sys{r
-i\bar{n}S_2 u^- = (I-S_1)u^+ \\
inS_2 u^+ = (I+S_1)u^-
}
\end{equation}
This boundary problem is locally elliptic for operator \eqref{eq:DirHam} if the linear space of the solutions of this system intersects with both spaces $\set{u^+=0}$ and $\set{u^-=0}$ by zero subspace.
This condition is equivalent to the invertibility of $S_2$. 
Matrix $S$ has the eigenvalues $\pm 1$, so $S_2$ has the eigenvalues 
$\cos\varphi\sin\Lambda \pm \sin\varphi\cos\Lambda = \sin(\Lambda\pm\varphi)$.
Both eigenvalues of $S_2$ are nonzero if and only if $\Lambda\pm\varphi \neq 0 \,(\mathrm{mod}\,\pi)$.
This completes the proof.

\medskip\noindent\textit{Proof of Proposition \ref{prop:2}.}
From \eqref{eq:bcS} we have $B = S_2\inv(I+S_1)$.
Taking into account the identity $S^2=I$, we obtain
\begin{multline*}
S_2\inv = \br{ \cos^2\varphi\sin^2\Lambda - \sin^2\varphi\cos^2\Lambda }\inv 
   \br{ \cos\varphi\sin\Lambda\cdot I - \sin\varphi\cos\Lambda\cdot S }, \\
S_2\inv(I+S_1) = 
 \br{ \cos^2\varphi\sin^2\Lambda - \sin^2\varphi\cos^2\Lambda }\inv \br{\sin\Lambda - \sin\varphi}
 \br{\cos\varphi\cdot I+\cos\Lambda\cdot S} = \\
 = \frac{\sin\Lambda-\sin\varphi}{\sin(\Lambda+\varphi)\sin(\Lambda-\varphi)}(\cos\varphi\cdot I + \cos\Lambda\cdot S).
\end{multline*}
Eigenvalues of $S$ are $\pm 1$,
so the eigenvalues of $B$ are equal to
\begin{equation}\label{eq:beta}
\beta_{\pm} = \frac{\sin\Lambda-\sin\varphi}{\sin(\Lambda+\varphi)\sin(\Lambda-\varphi)} (\cos\varphi\pm\cos\Lambda).
\end{equation}
From last two formulas we have 
\begin{multline*}
B = \frac{\sin\Lambda-\sin\varphi}{\sin(\Lambda+\varphi)\sin(\Lambda-\varphi)}
\br{ (\cos\varphi + \cos\Lambda)\frac{I+S}{2} + (\cos\varphi - \cos\Lambda)\frac{I-S}{2} } = \\
= \beta_+ \frac{I+S}{2} + \beta_- \frac{I-S}{2}
= \beta_+ P_+ + \beta_- P_-.
\end{multline*}
We can simplify \eqref{eq:beta} using sum-to-product trigonometric identities:
\[
\frac{\sin\Lambda-\sin\varphi}{\sin(\Lambda+\varphi)\sin(\Lambda-\varphi)} = 
\frac{ 2\sin\frac{\Lambda-\varphi}{2}\cos\frac{\Lambda+\varphi}{2}}
{\br{2\sin\frac{\Lambda+\varphi}{2}\cos\frac{\Lambda+\varphi}{2}} \br{2\sin\frac{\Lambda-\varphi}{2}\cos\frac{\Lambda-\varphi}{2}}} 
= \br{2\sin\frac{\Lambda+\varphi}{2}\cos\frac{\Lambda-\varphi}{2}}\inv, 
\]
\[
\cos\varphi + \cos\Lambda = 2\cos\frac{\Lambda+\varphi}{2}\cos\frac{\Lambda-\varphi}{2}, \quad
\cos\varphi - \cos\Lambda = 2\sin\frac{\Lambda+\varphi}{2}\sin\frac{\Lambda-\varphi}{2}.
\]
Substituting this in \eqref{eq:beta}, we obtain 
\[
\beta_+ = \ctg\frac{\Lambda+\varphi}{2}, \quad 
\beta_- = \tg\frac{\Lambda-\varphi}{2}.
\]
This completes the proof.

\part{Proof of Theorem \ref{thm:gDirac}}\label{part:2}

Note that for $\Dt'=\Dt+Q_0$, $Q'_t=Q_t-Q_0$ we have  
$\sp{\Dt+Q_t, B} = \sp{\Dt'+Q'_t, B}$ with $Q'_0=0$.
By this reason, in the proof we will restrict ourselves by the families $Q_t$ with $Q_0=0$. 

\section{Two technical lemmas}\label{sec:prop}

First of all, we need to give some technical details.
The reader interested only in the ideas behind the proof can go directly to the next section. 

Suppose $\Dt \in \DD$, $B\in \B$.
We will write $\DtB$ for the operator $\Dt$ acting on the domain
\begin{multline}\label{eq:domain}
  \dom\DtB = \\
  = \set{u\in L^2_1\br{X; \CC^{2N}}\colon \mbox{ restriction of } u \mbox{ to } \p X \mbox{ satisfies boundary condition \eqref{eq:gbc}} }.	
\end{multline}
Here $L^2_1\br{X; \CC^{2N}}$ is the first Sobolev space; its elements are functions $u\in L^2\br{X; \CC^{2N}}$ such that $\p_1 u, \p_2 u \in L^2\br{X; \CC^{2N}}$.
Strictly speaking, we use here not the restriction in the usual sense (trace map $u\mapsto \restr{u}{\p X}$)
but the extension by continuity of the trace map $C^{\infty}\br{X; \CC^{2N}} \to C^{\infty}\br{\p X; \CC^{2N}}$
to the bounded linear map from $L^2_1\br{X; \CC^{2N}}$ to $L^2_{1/2}\br{\p X; \CC^{2N}}$ \cite{BBLZ-09}.

Note that the operator $\DtB$ defined here is the closure of the operator $(\Dt, B)$ defined in Section \ref{sec:gDO}
(see \cite{BBLZ-09}, Proposition 2.9).
Using of non-closed operators in the first part of the paper is explained by our intention to avoid the introduction of Sobolev spaces and of the extension of the trace map as long as possible.
Due to the following Lemma, these two definitions give us the operators with the same eigenvectors, 
so this slight abuse of notation does not cause any troubles.

\begin{lem}\label{lem:1}
For any $\Dt \in \DD$, $B\in \B$ the operator $\DtB$ is
(unbounded) closed self-adjoint Fredholm operator on $L^2\br{X,g; \CC^{2N}}$,
which has the discrete real spectrum.
Moreover, all its eigenvectors are smooth functions.
\end{lem}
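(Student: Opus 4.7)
The plan is to verify the three standard hypotheses — formal self-adjointness of $\Dt$, self-adjointness of boundary condition \eqref{eq:gbc}, and local (Shapiro–Lopatinskii) ellipticity — and then invoke the general theory of first-order elliptic boundary value problems as developed in \cite{BBLZ-09, BBW-93}, from which all four assertions of the lemma follow essentially at once.

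For formal self-adjointness, the symmetrization term $iR_\Phi$ appearing in \eqref{eq:gDO} is defined precisely so that $-i(\rho_1\p_1+\rho_2\p_2)+iR_\Phi$ is formally self-adjoint on $\Cinf_c(\tiner X;\CC^{2N})$ with respect to the $L^2$-inner product induced by the volume element $\sqg\,\d x^1\d x^2$; adding the pointwise self-adjoint matrix $Q(x)$ preserves this. For self-adjointness of the boundary condition, I would apply Green's formula to pairs $u,v\in\dom\DtB$:
\[
\bra{\Dt u,\, v}_{L^2(X)} - \bra{u,\, \Dt v}_{L^2(X)} = -i\int_{\p X}\bra{\rho(x,n)u,\, v}_{\CC^{2N}}\,ds,
\]
where $\rho(x,n)=n_1\rho_1+n_2\rho_2$ is block off-diagonal with $(+,-)$-block equal to $n'(x)I_N$. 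Substituting the boundary conditions $in'u^+ = Bu^-$, $in'v^+ = Bv^-$ and using self-adjointness of $B(x)$, a direct calculation shows the boundary integral vanishes; hence $\DtB\subseteq\DtB^\ast$.

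For local ellipticity, I would check that at each $x\in\p X$ the boundary subspace $\set{(u^+,u^-)\in\CC^{2N}\colon in'(x)u^+ = B(x)u^-}$ has complex dimension $N$ and projects isomorphically onto both spinor components — which follows immediately from invertibility of the scalar $n'(x)$ and of the matrix $B(x)$. This is exactly the Shapiro–Lopatinskii condition for a first-order system of rank $2N$ on a surface. With these three hypotheses in hand, Theorem 7.16 of \cite{BBLZ-09} identifies $\DtB$, defined on the $L^2_1$-domain \eqref{eq:domain}, as a self-adjoint Fredholm operator on $L^2(X,g;\CC^{2N})$. Compactness of the Sobolev embedding $L^2_1(X)\hookrightarrow L^2(X)$ then yields discreteness of the real spectrum. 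Smoothness of eigenvectors follows from a standard elliptic regularity bootstrap up to the boundary: if $u\in L^2_k$ satisfies $\Dt u = \lambda u$ together with \eqref{eq:gbc}, then $u\in L^2_{k+1}$, so $u\in\bigcap_k L^2_k \subset \Cinf(X;\CC^{2N})$ by Sobolev embedding.

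The main obstacle is reconciling the two formally different definitions of $\DtB$ used in the paper: the non-closed operator on $C^1$-functions introduced in Section \ref{sec:gDO}, and the closed operator on the $L^2_1$-domain \eqref{eq:domain} used here. This identification — that the former is a core of the latter — is precisely the content of Proposition 2.9 of \cite{BBLZ-09}, and it is what justifies the assertion of the lemma that eigenvectors belong to the $C^1$-domain from Section \ref{sec:gDO}, as they do a fortiori once one has shown them to be smooth.
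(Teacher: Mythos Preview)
Your approach matches the paper's: check the Shapiro--Lopatinskii condition, apply Green's formula, and invoke the general theory of \cite{BBLZ-09}. One small correction: the paper cites Proposition~2.9 of \cite{BBLZ-09} throughout (Theorem~7.16 is reserved for the continuity statement in Lemma~\ref{lem:2}), and rather than appealing to a black-box self-adjointness result it explicitly upgrades symmetry to full self-adjointness by using Proposition~2.9 to place $\dom\DtB^\ast$ inside $L^2_1$ and then rerunning Green's formula with $u\in\dom\DtB$, $w\in\dom\DtB^\ast$ to force $in'w^+ - Bw^- = 0$ on $\p X$.
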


\proof
Let $B$ be a smooth function from $\p X$ to $\GL(N,\CC)$.
Then for any $\Dt\in\DD$, $\lambda\in\CC$ boundary condition \eqref{eq:gbc} satisfies the Sapiro-Lopatinskii condition for $\Dt-\lambda$: 
the intersections of the subspace $\set{u\colon in'(x)u^+ =  B(\x) u^-} \subset \CC^{2N}$ 
both with $\set{u\colon u^- = 0}$ and with $\set{u\colon u^+ = 0}$ are zero at any $x\in \p X$.
By Proposition 2.9 from \cite{BBLZ-09}, \eqref{eq:gbc} is strongly regular boundary condition for $\Dt$,
so all eigenvectors of $\DtB$ in $L^2\br{X,g; \CC^{2N}}$ are smooth functions.
By the same Proposition, $(\Dt-\lambda,B)$ is a closed Fredholm operator for any $\lambda\in\CC$,
so the spectrum of $\DtB$ is discrete.

For any $u,w \in L_1^2\br{X,g; \CC^{2N}}$ we have
\begin{multline*}
	\bra{\Dt u,w}_{L^2} - \bra{u, \Dt w}_{L^2} =
	\int_X \br{ \bra{\Dt u,w} - \bra{u, \Dt w} } \sqg\;\d x^1 \d x^2 = \\
	= -i \int_X {\br{ \p_1\bra{\sqg\rho_1 u,w} + \p_2\bra{\sqg\rho_2 u,w} } \d x^1 \d x^2} = \\
	= -i \int_X {\d \br{ \bra{\sqg\rho_1 u,w}\d x^2 - \bra{\sqg\rho_2 u,w}\d x^1 } } 
	= -i \int_{\p X} {\sqg \br{ \bra{\rho_1 u,w}\d x^2 - \bra{\rho_2 u,w}\d x^1 } } = \\
	= -i \int_{\p X} { \bra{ (n_1\rho_1+n_2\rho_2) u,w}\sqg\d s } 
	= -  \int_{\p X} { \bra{ \matr{i\bar{n}'u^- \\ in'u^+} , \matr{w^+\\w^-}}\sqg\d s }	= \\
	=    \int_{\p X} { \br{ \bra{ u^-, in'w^+} - \bra{in'u^+, w^-}}\sqg\d s },	
\end{multline*}
where $\d s$ is the length element on $\p X$.
So for any $u,w \in \dom\DtB$ 
$$\bra{\Dt u,w}_{L^2} - \bra{u, \Dt w}_{L^2} = \int_{\p X} { \bra{ u^-, (B-B^{\ast}) w^-}\sqg\d s },$$
and the operator $\DtB$ is symmetric on $L^2\br{X,g; \CC^{2N}}$ if and only if $B(\x)$ is self-adjoint at any $\x$.

Let now $w \in \dom\DtB^{\ast}$. 
By Proposition 2.9 from \cite{BBLZ-09}, $\dom\DtB^{\ast}$ is contained in $L_1^2\br{X,g; \CC^{2N}}$, so we can use the computation above:
$$\bra{\Dt u,w}_{L^2} - \bra{u, \Dt w}_{L^2} = \int_{\p X} { \bra{ u^-, (in'w^+ - Bw^-)}\sqg\d s }$$
for any $u \in \dom\DtB$.
Therefore, $\restr{in'w^+ - Bw^-}{\p X} = 0$ for any $w \in \dom\DtB^{\ast}$, $\dom\DtB^{\ast} = \dom\DtB$, 
and $\DtB$ is self-adjoint on $L^2\br{X,g; \CC^{2N}}$. 
All eigenvalues of a self-adjoint operator are real. This completes the proof. \\

In the statement of Theorem \ref{thm:gDirac} we used only norm continuous paths of operators with fixed domain. 
But for the proof of Theorem \ref{thm:gDirac} we have to deal with the paths of more general kind, 
when neither symbol of the operator nor boundary condition are fixed any more.
The paths we need for the proof are not norm continuous but only graph continuous 
(note that by Proposition 2.2 from \cite{L-04} any norm continuous path is graph continuous as well).
So further we will use the graph topology on the space of closed densely defined self-adjoint operators on 
a separable Hilbert space $H$ (in our case $H=L^2\br{X,g; \CC^{2N}}$).

There are various definitions of the graph distance, all of which give the same graph topology \cite{BLP-04}. 
One can take $d_G(A,A')=\norm{(A+iI)\inv-(A'+iI)\inv}$, 
or alternatively $d_G(A,A')=\norm{P_A-P_{A'}}$, where $P_A$, $P_{A'}$
are the orthogonal projections of $H\times H$ onto the graphs of $A$, $A'$ respectively.

Let us introduce the following metrics in $\DD$ and $\B$:
\begin{multline*}
d\br{ \Dt_{\Phi,\: Q}, \Dt_{\Phi',\: Q'} } = \norm{Q-Q'}_{C(X)} + \norm{\Phi-\Phi'}_{C^1(X)} = \\
= \max_{\x\in X} \norm{Q(\x)-Q'(\x)} + \max_{\x\in X} \br{ \norm{\Phi(\x)-\Phi'(\x)}+\norm{\p_1\Phi(\x)-\p_1\Phi'(\x)}+\norm{\p_2\Phi(\x)-\p_2\Phi'(\x)} },
\end{multline*}
\[
d(B,B') = \norm{B-B'}_{C^1(\p X)} =
 \max_{\x\in\p X} \br{ \norm{B(\x)-B'(\x)}+\norm{\p_s B(\x)-\p_s B'(\x)} },
\]
where $s$ is the length parameter on $\p X$.
Here we use any of the standard norms on the spaces $\End(\CC^{N})$ and $\End(\CC^{2N})$ of complex $N\times N$ and $2N\times 2N$ matrices, and on the space $\End(\R^2)$ of real $2\times 2$ matrices.

Note that $\br{\Dt_{\Phi_t,Q_t}, B_t}$ is the continuous path in $\DB$ if and only if
$Q_t(\x)$, $\Phi_t(\x)$, $B_t(\x)$, and the first partial derivatives of $\Phi_t(\x)$, $B_t(\x)$ with respect to $\x$
are continuous functions of $(t,\x)$.

Denote by $\HF(H)$ the space of closed self-adjoint (or, what is the same, Hermitian) Fredholm operators in separable Hilbert space $H$.
We fix graph topology on $\HF(H)$.
Nevertheless we will usually write \q{graph continuous} instead of mere \q{continuous} for the maps to $\HF(H)$ to avoid a misunderstanding.

By Lemma \ref{lem:1}, we have the natural inclusion $\DB \hookrightarrow \HF \br{L^2\br{X,g; \CC^{2N}}}$, which carries a pair $(D,B)\in\DB$ to the operator $D$ with the domain \eqref{eq:domain}. 

\begin{lem}\label{lem:2}
The natural inclusion $\DB \hookrightarrow \HF \br{L^2\br{X,g; \CC^{2N}}}$
is graph continuous.
\end{lem}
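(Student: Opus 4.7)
The plan is to establish graph continuity of $\DB \to \HF(L^2(X, g; \CC^{2N}))$ through a three-step reduction matching the three summands in the metrics on $\DD$ and $\B$. First I would absorb the variation in $Q$: the potential $Q(\x) \in \Herm(\CC^{2N})$ acts on $L^2(X, g; \CC^{2N})$ as a bounded self-adjoint multiplication operator of norm at most $\|Q\|_{C^0(X)}$, so $Q_n \to Q$ in $C^0(X)$ yields uniform operator-norm convergence of the corresponding bounded perturbations of any fixed $(\Dt, B)$. Norm convergence of bounded self-adjoint perturbations of a self-adjoint Fredholm operator is graph continuous in $\HF$, so the problem reduces to continuity of $(\Phi, B) \mapsto (\Dt_{\Phi, 0}, B)$.

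For the reduction with $B$ fixed and $\Phi$ varying, all operators $(\Dt_{\Phi, 0}, B)$ share the common domain from \eqref{eq:domain}, and $\Dt_\Phi - \Dt_{\Phi'}$ is a first-order differential operator whose coefficients and zero-order corrections are bounded in $C^0(X)$ by a constant multiple of $\|\Phi - \Phi'\|_{C^1(X)}$, giving $\|(\Dt_\Phi - \Dt_{\Phi'}) u\|_{L^2} \le C \|\Phi - \Phi'\|_{C^1(X)} \|u\|_{L^2_1}$. The elliptic estimate from Proposition 2.9 of \cite{BBLZ-09}, invoked just as in the proof of Lemma \ref{lem:1}, gives $\|u\|_{L^2_1} \le C'(\|\Dt_{\Phi'} u\|_{L^2} + \|u\|_{L^2})$ for $u \in \dom(\Dt, B)$ with $C'$ locally uniform in $\Phi'$. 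Thus $\Dt_\Phi - \Dt_{\Phi'}$ is relatively bounded with respect to $(\Dt_{\Phi'}, B)$ with vanishing relative bound, and Kato-type perturbation theory yields norm resolvent convergence, hence graph convergence.

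The main step, which I expect to be the technical heart of the argument, is varying $B$ with $\Phi$ fixed. For $B' \in \B$ close to $B$ in $C^1(\p X)$, I would construct a smooth unitary matrix-valued function $U_{B'} \colon X \to U(2N)$ with $U_B = I$ and $U_{B'} \to I$ in $C^1(X)$ as $B' \to B$, such that pointwise multiplication by $U_{B'}$ maps $\dom(\Dt, B)$ onto $\dom(\Dt, B')$. The construction exploits that the boundary relation $in'u^+ = B u^-$ cuts out a subbundle $L_B \subset \CC^{2N}|_{\p X}$ of rank $N$; as $B$ varies continuously in $C^1$, $L_B$ varies continuously in the relevant Grassmannian, and close-by subbundles of fixed rank in a trivial Hermitian bundle are related by close-to-identity unitary automorphisms of $\CC^{2N}$. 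I would take such an intertwining unitary on $\p X$ and extend it smoothly into a collar by a cutoff, equal to $I$ outside. Multiplication by $U_{B'}$ is then unitary on $L^2$ and conjugates $(\Dt, B')$ to $(U_{B'}^{-1} \Dt U_{B'}, B)$. The conjugated operator has the domain of $(\Dt, B)$ and differs from $\Dt$ by the first-order commutator $U_{B'}^{-1}[\Dt, U_{B'}]$, whose coefficients are $O(\|U_{B'} - I\|_{C^1(X)})$ in $C^0(X)$; the previous step then yields $d_G((\Dt, B), (U_{B'}^{-1} \Dt U_{B'}, B)) \to 0$. A direct estimate on graph projectors in $L^2 \oplus L^2$ shows that unitary equivalence via a multiplier that is $C^0$-close to the identity shifts the graph projector by $O(\|U_{B'} - I\|_{C^0(X)})$, so the triangle inequality completes the proof. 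The principal obstacle is the construction of $U_{B'}$ and the verification that $B' \mapsto U_{B'}$ is continuous from $C^1(\p X)$ to $C^1(X; U(2N))$, which amounts to a smooth-slice property for the $U(2N)$-gauge action on the Grassmannian of rank-$N$ subbundles of $\CC^{2N}|_{\p X}$.
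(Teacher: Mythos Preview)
Your overall strategy is reasonable and genuinely different from the paper's, but Step~2 contains a real gap. You assert that for $B$ fixed, all operators $(\Dt_{\Phi,0},B)$ share the common domain \eqref{eq:domain}. They do not: the boundary condition \eqref{eq:gbc} reads $in'(x)u^+ = B(x)u^-$ with $n' = n'_1+in'_2$ and $(n'_1,n'_2)=(n_1,n_2)\Phi$, so $n'$, and hence the domain, depends on $\Phi$. Your relative-boundedness argument therefore does not apply as stated. The repair is straightforward once noticed: since $n'\neq 0$, the boundary relation is equivalent to $u^+ = (in')^{-1}Bu^-$, so varying $\Phi$ amounts to varying the effective boundary matrix $A=(in')^{-1}B$. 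You should therefore merge Steps~2 and~3: construct the unitary intertwiner to absorb the domain change coming from \emph{both} $\Phi$ and $B$, and then note that $U^{-1}\Dt_{\Phi}U - \Dt_{\Phi'}$ is a first-order operator with $C^0$-small coefficients plus the zero-order commutator term, to which your Kato-type estimate applies. (A small terminological slip: the commutator $U_{B'}^{-1}[\Dt,U_{B'}]$ you call ``first-order'' is in fact a multiplication operator, hence bounded on $L^2$; this actually makes your life easier.)

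By contrast, the paper avoids building intertwiners altogether. It shows directly that the map sending $(\Dt,B)$ to the orthogonal projection onto the boundary subspace $\{in'u^+=Bu^-\}$ is continuous into $\End\bigl(L^2_{1/2}(\p X;\CC^{2N})\bigr)$, using an explicit formula for the projection with kernel $\{u^+=Au^-\}$ together with a multiplier result of Strichartz to pass from $C^1(\p X)$ to bounded operators on $L^2_{1/2}$; separately it checks that $\Dt$ depends continuously on $(\Phi,Q)$ as a bounded map $L^2_1\to L^2$. These two facts are precisely the hypotheses of Theorem~7.16 in \cite{BBLZ-09}, which is then invoked as a black box to conclude graph continuity. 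Your route is more self-contained but longer and requires the intertwiner construction you flag as the principal obstacle; the paper's route is shorter but leans on the heavy machinery of \cite{BBLZ-09}.
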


Therefore, if $t\mapsto\brop{\Dt_t,B_t}$ is the continuous path in $\DB$,
then $\brop{\Dt_t,B_t}$ defines the graph continuous path in $\HF \br{L^2\br{X,g; \CC^{2N}}}$,
and the spectral flow of the operators family $\brop{\Dt_t,B_t}$ is well defined.

\proof
Let us consider the smooth map
\[
\psi \colon \End(\CC^N) \to \Herm(\CC^{2N}), \quad
A \mapsto P =
\matr{ I_N & -A \\ -A^{\ast} & A^{\ast}A }
\matr{I_N+AA^{\ast} & 0 \\ 0 & I_N+A^{\ast}A }\inv,
\]
which carries $A\in\End(\CC^N)$ into the orthogonal projection $P$
of $\CC^{2N}$ with $\Ker P = \set{ u=\br{u^+,u^-} \colon u^+,u^- \in \CC^N, u^+ = Au^- }$.
It induces the continuous map
\[
\psi_{\ast} \colon C^1\br{\p X, \End(\CC^N)} \to C^1\br{\p X, \Herm(\CC^{2N})}.
\]
Composing $\psi_{\ast}$ with the continuous map 
$$\DB\to C^1\br{\p X, \End(\CC^N)}, \quad \DtB \mapsto 
-i\rho^+(x,n(x))\inv B(x),$$ 
we obtain the continuous map 
$$\Psi\colon\DB\to C^1\br{\p X, \Herm(\CC^{2N})},$$ 
which carries $(\Dt,B)$ into the orthogonal projection $P$ of $L^2\br{\p X, \restr{g}{\p X};
\CC^{2N}}$ with the kernel defined by boundary condition \eqref{eq:gbc} 
\footnote{Here we use the general formula
for the orthogonal projection $P$ with the kernel $\set{u^+ = Au^-}$ for arbitrary matrix $A$. 
Actually, in our case $A = (in')\inv B$ is normal: $AA^{\ast}=A^{\ast}A$.}.

By Proposition II.1.1 from \cite{Str-67}, we have the continuous inclusion of the Banach spaces
\[
\End\br{L^2_1\br{\p X; \CC^{2N}}} \subset \End\br{L^2_{1/2}\br{\p X; \CC^{2N}}},
\]
where $\End(V)$ denotes the space of bounded linear operators on a Banach space $V$,
$L^2_{\,r}$ is the (fractional) Sobolev space.
Composing it with the natural continuous inclusion 
\[
C^1\br{\p X, \End(\CC^{2N})} \subset \End\br{L^2_1\br{\p X; \CC^{2N}}},
\]
we obtain that the map $\Psi_{\ast} \colon\DB \to \End\br{L^2_{1/2}\br{\p X; \CC^{2N}}}$ is continuous.

The natural map from $\DD$ to the space of bounded linear operators
from $L^2_1\br{X; \CC^{2N}}$ to $L^2\br{X; \CC^{2N}}$ is continuous too:
\begin{multline*}
\norm{\Dt_{\Phi,Q}-\Dt_{\Phi',Q'}}_{1,0} \leq
\const\br{ \norm{\Phi-\Phi'}_{C(X)} + \norm{R_{\Phi}-R_{\Phi'}}_{C(X)} + \norm{Q-Q'}_{C(X)} } \leq \\
\leq \const\br{ \norm{\Phi-\Phi'}_{C^1(X)} + \norm{Q-Q'}_{C(X)} }.
\end{multline*}

By Theorem 7.16 from \cite{BBLZ-09} and by Lemma \ref{lem:1}, 
this implies that the inclusion $\DB \hookrightarrow \HF \br{L^2\br{X,g; \CC^{2}}}$ is graph continuous.
This completes the proof.


\section{Basic properties of the spectral flow}\label{sec:P0-P4}

There can be different versions of the definition of the spectral flow when one or both of the endpoints of the path is non-invertible. 
If a path is a loop up to a gauge transformation as in the first part of the paper, then the value of the spectral flow is independent of the choice of the definition. 
But for the proofs below
we have to fix some choice.

\begin{defi}
Let $\br{A_t}$ be an 1-parameter graph continuous family of closed self-adjoint Fredholm operators
in a separable complex Hilbert space $H$.
Take a small $\varepsilon>0$ such that $A_0$, $A_1$
have no eigenvalues in the interval $[-\varepsilon, 0)$. 
We put $\sp{A_t}$ be equal to $\sp{A_t+\varepsilon I}$,
were we use any of the (equivalent) definitions of the spectral flow for the path of operators with invertible endpoints from \cite{BLP-04, L-04}.
This definition 
does not depend on the choice of such $\varepsilon$.
\end{defi}

We will need the following properties of the spectral flow.

\medskip\noindent
\textbf{(P0) Zero crossing.}
In the absence of zero crossing the spectral flow vanishes.
More precisely, suppose $\gamma \colon [0,1] \to \DB$ is the continuous path such that 0 is not
the eigenvalue of $\gamma(t)$ for any $t\in [0,1]$. Then $\sp{\gamma} = 0$.

\medskip\noindent
\textbf{(P1) Homotopy invariance.}
The spectral flow along the continuous path $\gamma \colon [0,1] \to \DB$ does not change when
$\gamma$ changes continuously in the space of paths in $\DB$ with
the fixed endpoints (the same as the endpoints of $\gamma$).

In other words, for the continuous map $h \colon [0,1]\times [0,1] \to \DB$ such that
$h_s(0)\equiv \br{\Dt_0,B_0}$, $h_s(1)\equiv \br{\Dt_1,B_1}$,
we have $\sp{h_0(t)}\tin = \sp{h_1(t)}\tin$.

\medskip\noindent
\textbf{(P2) Path additivity.}
Suppose $\gamma \colon [a,c] \to \DB$ is a continuous path, $a\leq b\leq c$.
Then $\sp{\gamma(t)}_{t\in [a,c]} = \sp{\gamma(t)}_{t\in [a,b]} + \sp{\gamma(t)}_{t\in [b,c]}$.

\medskip\noindent
\textbf{(P3) Additivity with respect to direct sum.}
Let $N_1$, $N_2$ be natural numbers, $\br{\Dt_t^i,B_t^i}$ be continuous paths in $\DD_{N_i}\times\B_{N_i}$.
Then the spectral flow along the path $\br{\Dt_t^1\oplus\Dt_t^2, B_t^1\oplus B_t^2}$ is equal to the sum of the spectral flows along the paths $\br{\Dt_t^1,B_t^1}$ and $\br{\Dt_t^2,B_t^2}$.

\medskip\noindent
\textbf{(P4) Conjugacy invariance.} Let $J_{\pm}\colon X \to U(N)$ be unitary $N\times N$ matrices smoothly
dependent on $\x\in X$, $J = \matr{J_+ & 0 \\ 0 & J_-} \colon X \to U(2N)$, $\br{ \Dt_t, B_t }$ be a smooth path
in $\DB$. Then $\sp{ \Dt_t, B_t } = \sp{ J\Dt_t J\inv, J_- B_t J_-\inv }$.

More generally, if $H$ is a separable complex Hilbert space, $J$ is an unitary isomorphism of $H$,
$\br{A_t}$ is an 1-parameter graph continuous family of closed self-adjoint Fredholm operators,
then $\sp{A_t} = \sp{J A_t J\inv}$.

\begin{rem}
Properties (P1) and (P2) imply that the spectral flow along the path is opposite to the spectral flow along the same path passing in the opposite direction.
\end{rem}

\proof
By Lemmas \ref{lem:1}-\ref{lem:2}, the inclusion of $\DB$ into $\HF\br{L^2\br{X,g; \CC^{2N}}}$ is graph continuous.
So it is sufficient to prove Properties (P0-P4) for graph continuous paths in the space $\HF(H)$ 
of all closed self-adjoint Fredholm operators in separable Hilbert space $H$; 
this will imply properties (P0-P4) for the paths in $\DB$.

First three properties of the spectral flow for graph continuous paths in $\HF(H)$
are proved in \cite{BLP-04} (Proposition 2.2),
taking into account the convention from Section \ref{seq:sf} for the case when $\gamma(0)$ or $\gamma(1)$ are non-invertible.

Conjugacy invariance of the spectral flow for graph continuous paths in $\HF(H)$ follows from the uniqueness property of the spectral flow.
Namely, let $J$ be an unitary isomorphism of a separable complex Hilbert space $H$.
To each graph continuous path $\br{A_t}$ in $\HF(H)$  assign the integer $\spf\new(A_t) = \spf(J A_t J\inv)$. 
Then $\spf\new$ satisfies Concatenation, Homotopy and Normalization properties in the sense of \cite{L-04}.
By Theorem 5.9 from \cite{L-04}, this imply that $\spf\new$ equals $\spf$ for the paths in $\HF(H)$ with invertible endpoints. 
Taking into account our convention from Section \ref{seq:sf} and 
choosing a small $\varepsilon>0$ such that $A_0$, $A_1$ have no eigenvalues in the interval $[-\varepsilon, 0)$, 
we obtain
\[
\sp{J A_t J\inv} = \sp{J A_t J\inv +\varepsilon I}
 = \spf\new(A_t+\varepsilon I) = \sp{A_t+\varepsilon I} = \sp{A_t}.
\]

To prove (P3), consider graph continuous paths $(A_t)$, $(A'_t)$ in $\HF(H)$, $\HF(H')$ respectively. 
Suppose for a while that $A_0$, $A_1$, $A'_0$, and $A'_1$ are invertible.
The path $(A_t\oplus A'_t)\tin$ is homotopic to the concatenation of paths 
$(A_t\oplus A'_0)\tin$ and $(A_1\oplus A'_t)\tin$ in $\HF(H\oplus H')$. 
The spectral flow of the path $(A_t\oplus A'_0)$ in $\HF(H\oplus H')$ considered as the function of $(A_t)$ satisfies 
Concatenation, Homotopy and Normalization properties in the sense of \cite{L-04}, so
by the uniqueness property of the spectral flow from \cite{L-04} we have $\sp{A_t\oplus A'_0}=\sp{A_t}$. 
Similarly, $\sp{A_1\oplus A'_t}=\sp{A'_t}$. 
Therefore, $\sp{A_t\oplus A'_t} = \sp{A_t}+\sp{A'_t}$ for any paths $(A_t)$, $(A'_t)$ with invertible endpoints.
Taking into account our convention from Section \ref{seq:sf}, 
we obtain that $\sp{A_t\oplus A'_t} = \sp{A_t}+\sp{A'_t}$ for arbitrary paths $(A_t)$, $(A'_t)$.
This completes the proof.

\section{Independence of the choice of family $(Q_t)$}

Let us prove that the spectral flow along $\br{\Dt+Q_t, B}$ does not depend on the choice of $(Q_t)$ when $\Dt$, $B$, $\mu $ are fixed.

Let $Q_t$, $Q'_t$ be two 1-parameter families of smooth maps from $X$ to $\Herm(\CC^{2N})$
such that $Q_0=Q'_0=0$, $Q_1=Q'_1 = \mu \Dt \mu \inv - \Dt$.

The path $\Dt+Q_t$ can be continuously changed to the path $\Dt+Q'_t$ in the class of paths in $\DD$ with the fixed endpoints.
For example, we can take the homotopy $h(s,t) = \Dt + (1-s)Q_t + sQ'_t$.
By the homotopy invariance property (P1) of the spectral flow,
$\sp{\Dt+Q_t, B}\tin = \sp{\Dt+Q'_t, B}\tin$.

Therefore, if $Q_0=0$, $Q_1 = \mu \Dt \mu \inv - \Dt$ then 
$$\sp{\Dt+Q_t, B}\tin = F(X,g,N,\Dt,B,\mu )$$
for some integer-valued function $F$.
Now we will investigate the properties of this function.

\section{Independence of the choice of operator $\Dt$}
\label{sec:indep_D}

\textbf{1.}
Suppose that $\Dt_0$ is homotopic to $\Dt_1$ in $\DD$,
that is there exist a continuous 1-parameter family of the Dirac type operators $\Dt_s$ connecting $\Dt_0$ with $\Dt_1$.
We will show now that $F(X,g,N,\Dt_0,B,\mu ) = F(X,g,N,\Dt_1,B,\mu )$.

Let us consider 2-parameter family of the Dirac type operators $\Dt_{s,t} = (1-t)\Dt_s + t \mu \Dt_s \mu \inv$.
Note that $\Dt_{s,0} = \Dt_s$, $\Dt_{s,1} = \mu \Dt_s \mu \inv$,
$\Dt_{s,t} - \Dt_{s,0} = tQ_s$,
where $Q_s =  \mu \Dt_s \mu \inv - \Dt_s$ is the 1-parameter family of $2N\times 2N$ self-adjoint complex matrices smoothly dependent on $\x\in X$.

\begin{figure}[tbh]
\begin{center}
\includegraphics[width=0.4\textwidth]{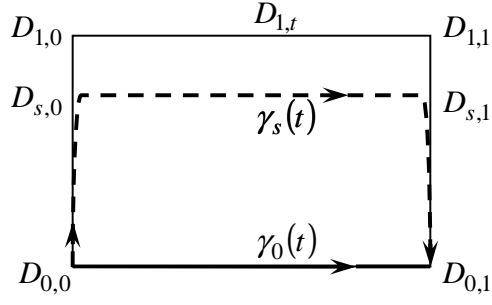}
\caption{Homotopy from $\gamma_0(t)$ to $\gamma_1(t)$}
\label{fig:Dst}
\end{center}
\end{figure}
Let us define the path $\gamma_1 \colon [0,3]\to\DD$ by the formula
\[
\gamma_1(t) =
\sys{ {ll}
\Dt_{t,0}, & t\in [0,1] \\
\Dt_{1,t-1}, & t\in [1,2] \\
\Dt_{3-t,1}, & t\in [2,3]
}
\]
In other words, we consequently go around the left, top and right sides of the rectangle on Fig.~\ref{fig:Dst} 
in clockwise direction. The path $\gamma_1$ can be continuously deformed to the path $\Dt_{0,t} = \Dt_0+tQ_0$
within the rectangle. For example, we can take as such a deformation the family
\[
\gamma_s(t) =
\sys{ {ll}
\Dt_{st,0}, & t\in [0,1] \\
\Dt_{s,t-1}, & t\in [1,2] \\
\Dt_{s(3-t),1}, & t\in [2,3]
}
\]
Then $\gamma_0(t)$ is the path $\br{\Dt_0+(t-1)Q_0}_{t\in [1,2]}$ concatenated with two steady paths, 
the spectral flows along which are zero by property (P0).

By the homotopic invariance property of the spectral flow,
\[
\spf\br{ \gamma_1(t), B}_{t\in [0,3]} =
\spf\br{ \gamma_0(t), B}_{t\in [0,3]} =
\sp{\Dt_0+tQ_0, B}\tin.
\]
On the other hand, the spectral flows along the first and the third parts of $\gamma_1$ are mutually reduced by (P4):
\[
\spf\br{ \gamma(t), B}_{t\in [0,1]} + \spf\br{ \gamma(t), B}_{t\in [2,3]} =
\sp{\Dt_s, B}_{s\in [0,1]} - \sp{\mu \Dt_s \mu \inv, B}_{s\in [0,1]} = 0.
\]
Therefore, $\sp{\Dt_0+tQ_0, B}\tin = \spf\br{ \gamma_1(t), B}_{t\in [1,2]} = \sp{\Dt_1+tQ_1, B}\tin$,
and $F(X,g,N,\Dt_0,B,\mu ) = F(X,g,N,\Dt_1,B,\mu )$.

\bigskip
\noindent
\textbf{2.}
Now we will simplify $\Dt$ step by step.

At first, we can continuously change 
$\Dt = \Dt_{\Phi,Q} = -i \br{\rho_1 \px{} + \rho_2 \py{} } + iR_{\Phi}(\x) + Q(\x)$ 
to the operator $\Dt_{\Phi, \: 0}$, for example, along the path
$\Dt_{\Phi,(1-s)Q}$.

Further, take a smooth map $h\colon [0,1]\times \GL^+(2,\R)\to \GL^+(2,\R)$ such that $h(0,\cdot)$ is the identity map, while the image of $h(1,\cdot)$ is the group $SO(2,\R)$ of $2\times 2$ orthogonal real matrices with determinant equal to one
(the existence of such a family is well known in any dimension; this is an easy application of Gram-Schmidt orthonormalisation procedure).
The operator $\Dt_{\Phi, \: 0}$ can be continuously changed in $\DD$ along the path $\Dt_{h(t,\Phi), \: 0}$ to the
operator $\Dt_{\Phi', \: 0}$, where
\[
\Phi'(\x) = \matr { \cos\varphi & \sin\varphi \\ -\sin\varphi & \cos\varphi } \in SO(2,\R)
\]
for some smooth function $\varphi$ from $X$ to $S^1$.
So we have 
$$F(X,g,N,\Dt_{\Phi,Q},B,\mu )=F(X,g,N,\Dt_{\Phi', \: 0},B,\mu ).$$

On the other hand, $\Dt_{\Phi', \: 0}$ can be represented as $J\inv
\Dt_{I, \: 0} J$, where $I=I_2$ is the identity $2\times 2$ matrix,
$$J(\x) = \matr {J_+ & 0 \\ 0 & J_-} = \matr {e^{i\varphi}I_N & 0 \\ 0 & I_N} \in U(2N).$$

Let $Q_t$ be an 1-parameter family of self-adjoint $2N\times 2N$ complex matrices such that
$Q_0=0$, $Q_1 = \mu  \Dt_{\Phi', \: 0} \mu \inv - \Dt_{\Phi', \: 0}$.
Applying property (P4) of the spectral flow, we obtain
\begin{multline*}
F(X,g,N,\Dt_{\Phi', \: 0},B,\mu ) = \sp{\Dt_{\Phi', \: 0}+Q_t, B} =
 \sp{ J \br{\Dt_{\Phi', \: 0} + Q_t} J\inv, J_- B J_-\inv} = \\
 = \sp{\Dt_{I, \: 0} + J Q_t J\inv, B} = F(X,g,N,\Dt_{I, \: 0},B,\mu ),
\end{multline*}
because $J Q_1 J\inv = \mu  \br{J\Dt_{\Phi', \: 0}J\inv} \mu \inv - J\Dt_{\Phi', \: 0}J\inv = \mu \Dt_{I, \: 0}\mu\inv - \Dt_{I, \: 0}$.

Therefore, $F(X,g,N,\Dt,B,\mu)$ does not depend on the choice of $\Dt\in\DD$, so from now on we will write $F(X,g,N,B,\mu)$ instead of $F(X,g,N,\Dt,B,\mu )$.

\section{Independence of the metric and invariance under the change of variables}
\label{sec:Xg}

We prove here that $F(X,g,N,B,\mu)$ is independent from the choice of the metric $g$ on $X$, 
invariant under the change of variables,
and does not depend on the geometry of $X$,
using the fact that the number of holes is the only topological invariant of the disk with holes,
and that the spectral flow is conjugacy invariant and does not depend on the choice of the operator.

Let $X$, $X'$ be compact planar domains, each bounded by $m$ smooth curves,
and $g$, $g'$ be Riemannian metrics on $X$, $X'$ respectively.

As well known, there exists an orientation-preserving diffeomorphism $f\colon X'\to X$.
\phys{In other words, there exist a smooth one-to-one change of variables 
$(x'^1,x'^2)=\x' \stackrel{f}{\rightarrow} \x=(x^1,x^2)$
with the smooth inverse and with positive Jacobian determinant $\det({\p\x}/{\p\x'})$, which transforms $X'$ onto $X$.}

We define $\theta$ as the smooth function from $X'$ to $\R^+$ such that $f^{\ast}\dvol = \theta\dvol'$,
where $\dvol$, $\dvol'$ are volume elements on $X$, $X'$ respectively.
\phys{As usual, by $f^{\ast}$ we denote the homomorphism from the differential forms (in particular, functions) on $X$
to the differential forms on $X'$, which is induced by $f$.
In coordinate form, $\theta(\x') = \frac{\sqrt{g(f(x'))}}{\sqrt{g'(x')}}\det(\frac{\p\x}{\p\x'})$.}

Diffeomorphism $f$ defines the unitary isomorphism $J$ from the Hilbert space $L^2\br{X,g; \CC^{2N}}$
to the Hilbert space $L^2\br{X',g'; \CC^{2N}}$,
$u \mapsto \sqrt{\theta}f^{\ast}u$. \phys{That is $(Ju)(\x') = \sqrt{\theta(\x')}u(f(\x'))$.}

Isomorphism $J$ transforms the operator $\Dt\in\DD\XgN$ with symbol $\rho$ 
to the symmetric operator $\Dt'=J\Dt J\inv$ on $X'$  with symbol $\rho'$.
For any $x'\in X'$, $x=f(x')$, any cotangent vector $\xi\in T^*_{x}X$, $\xi'=f^{\ast}\xi$, 
we have $\rho'(x',\xi') = \rho(x,\xi)$, 
that is $\rho'(\x') = \br{\frac{\p\x'}{\p\x}}\rho(x) = \br{\frac{\p\x'}{\p\x}}\Phi\sigma$ in coordinate representation.
The matrix $\br{\frac{\p\x'}{\p\x}}\Phi(x)$ is contained in $\GL^+(2,\R)$ for any $\x\in X$, so $\Dt'\in\DD_{X',\,g',\,N}$.

Let $\mu$ be a smooth function from $X$ to $U(1)$.
Taking the map $\mu'=f^{\ast}\mu$ from $X'$ to $U(1)$
and the map $B'=\left\| f^{\ast}\n \right\|_{g'}\inv f^{\ast}B$ from $\p X'$ to $\Herm(\CC^N)$,
we obtain
$$\mu'\Dt'\mu'^{-1} - \Dt' = \mu'\br{J\Dt J\inv}\mu'^{-1} - J\Dt J\inv = J\br{\mu\Dt\mu\inv - \Dt}J\inv.$$
So if $Q_t$ connects $Q_0=0$ with $Q_1=\mu\Dt\mu\inv - \Dt$, then
$Q'_t = JQ_t J\inv$ connects $Q'_0=0$ with $Q'_1=\mu'\Dt'\mu'^{-1} - \Dt'$,
and by the conjugacy invariance of the spectral flow (P4), we have
\[
\sp{\Dt+Q_t,B} = \spf\br{J(\Dt+Q_t)J\inv,B'} = \sp{\Dt'+Q'_t,B'}.
\]
However, $B'$ is homotopic to $f^{\ast}B$ in $\B_{X',\,N}$,
while the spectral flow of $(\Dt'+Q'_t, \widetilde{B})$ is invariant under the continuous change of $\widetilde{B}$ in $\B_{X',\,N}$ (this is verified in a way similar to the proof in Section \ref{sec:indep_D}).
Therefore $\sp{\Dt'+Q'_t,B'} = \sp{\Dt'+Q'_t, f^{\ast}B}$,
and finally we obtain
\begin{equation}
    F(X,g,N,B,\mu) = F(X',g',N,f^{\ast}B,f^{\ast}\mu).
\end{equation}
\noindent
This completes the proof.

In particular, for any two metrics $g$, $g'$ on the same $X$, using the identity diffeomorphism $f$, we have
$$F(X,g,N,B,\mu) = F(X,g',N,B,\mu).$$

Further we will write $F(N,B,\mu)$ instead of $F(X,g,N,B,\mu)$.

\section{Boundary conditions}

Let us investigate the dependence of $F(N,B,\mu )$ on $B$.

$F(N,B,\mu )$ does not change when $B$ continuously changes in
$\B$; this is verified in a way similar to the proof in Section
\ref{sec:indep_D}.

Let $b_j$ be the number of negative eigenvalues of $B$ (counting multiplicities) on $\p X_j$.
We prove that the ordered set $\hat{b} = \br{b_j}_{j=1}^m$
uniquely determines $B$ up to continuous variation of $B$ in $\B$.

Obviously, $\hat{b}$ is invariant with respect to such variations,
so we only have to prove that any two $B$, $B'$ with the same $\hat{b}$ are homotopic.
It is sufficient to prove that any smooth map $A$ from the circle $S^1$ to the space of complex self-adjoint invertible $N\times N$ matrices is homotopic (in the space of all such maps with $C^1$-metric)
to the steady map sending $S^1$ to the point $(-I_k)\oplus I_{N-k} \in \Herm(\CC^N)$,
where $k$ is the number of negative eigenvalues of $A(\x)$, $\x\in S^1$.

\bigskip
\noindent
\textbf{1.} Let us consider the continuous 1-parameter family
$A_s$ of smooth maps from $S^1$ to the space of complex self-adjoint invertible $N\times N$ matrices 
defined by the formula $A_s = A\cdot\br{ (1-s)I_N +sA^2 }^{-1/2}$. 
This expression is correct because $(1-s)I_N +sA^2$ is self-adjoint and positive definite for any $s\in [0,1]$. 
The family $A_s$ gives us the deformation from $A=A_0$ to smooth map
$A_1$ from $S^1$ to the space of self-adjoint \textit{unitary} $N\times N$ matrices.

\bigskip
\noindent
\textbf{2.}
The connected component of $A_1(\x)$ in the space of self-adjoint unitary $N\times N$ matrices is diffeomorphic to the space $Gr_{\CC}(k,N)$ of all $k$-dimensional linear subspaces of $\CC^N$.
This diffeomorphism is defined by the correspondence $U\mapsto \Ker\br{I_N+U}$, which associates with $U$ the invariant subspace $V\subseteq\CC^N$ of U corresponding to eigenvalue $-1$ of $U$.
The inverse diffeomorphism is defined by the formula $V\mapsto U=(-I)_V \oplus I_{V^{\bot}}$.

The complex Grassmanian $Gr_{\CC}(k,N)$ is known to be simply connected,
so any two continuous maps from the circle to $Gr_{\CC}(k,N)$ are homotopic.
Taking into account that $Gr_{\CC}(k,N)$ is the smooth manifold,
we obtain that the space of smooth maps from the circle to $Gr_{\CC}(k,N)$ (with $C^1$-metric) is path-connected.
The same is true for the connected component of the space of self-adjoint unitary $N\times N$ matrices 
which is diffeomorphic to $Gr_{\CC}(k,N)$,
so $A$ can be continuously changed in the class of smooth maps to the steady map $\x \mapsto (-I_k)\oplus I_{N-k}$.
This completes the proof.

\section{Gauge transformations}\label{seq:mu}

\textbf{1.}
We will prove that $F$ is linear in $\mu$, that is
$F(N,B,\mu_1 \mu_2) = F(N,B,\mu_1)+F(N,B,\mu_2)$ for any smooth functions $\mu_1, \mu_2 \colon X \to U(1)$.

Let $Q_i = \mu_i \Dt \mu_i\inv - \Dt$.
Then $Q_1+Q_2 = \br{\mu_1 \mu_2}\Dt\br{\mu_1 \mu_2}\inv - \Dt$,
so by definition $F(N,B,\mu_1 \mu_2)$ is equal to the spectral flow along the path $\br{ \Dt+P_t, B }_{t\in [0,2]}$, where $P_0=0$, $P_2=Q_1+Q_2$.
We can take $P_t$ composed from two parts: from $0$ to $Q_1$ and then from $Q_1$ to $Q_1+Q_2$, for example,
\[
P_t =
\sys{ {ll}
tQ_1, & t\in [0,1] \\
Q_1+(t-1)Q_2, & t\in [1,2]
}
\]
Using the property (P2) of the spectral flow, we obtain
\begin{multline*}
F(N,B,\mu_1 \mu_2) =
\sp{ \Dt+P_t, B }_{t\in [0,1]} + \sp{ \Dt +P_t, B }_{t\in [1,2]} = \\
= \sp{ \Dt+tQ_1, B }\tin + \spf\br{ (\Dt+Q_1)+tQ_2, B }\tin = \\
= F(N,B,\mu_1)+F(N,B,\mu_2),
\end{multline*}
so $F$ is linear in $\mu$.

\bigskip
\noindent
\textbf{2.}
By $M$ denote the set of equivalence classes of
smooth functions $\mu\colon X\to U(1)$, where two functions are
equivalent if one of them can be continuously changed to another
in the space of smooth functions from $X$ to $U(1)$ (with $C^1$-metric).
We will consider $M$ as the Abelian group, where the group structure on
$M$ is induced by the group structure on $U(1)$.
It is well known that
\[
M = \set{ \br{\mu_1,..,\mu_m}\in\Z^m\colon \sum\mu_j = 0 },
\]
with the group structure induced from $\Z^m$,
and the class of $\mu$ in $M$ is defined by the $m$-tuple $\hat{\mu} = \br{\mu_j}$,
where $\mu_j$ is the degree of the restriction of $\mu $ to $\p X_j$.

Let us prove that $F(N,B,\mu)$ depends only on the class of $\mu$ in $M$.

Suppose that $\mu_t$ is a continuous path in  the space of smooth functions from $X$ to $U(1)$
such that $\mu_0(\x)\equiv 1$.
By the previous clause, it is sufficient to prove that $F(N,B,\mu_1)=0$.
Let us take $Q_t = \mu_t \Dt {\mu_t}\inv - \Dt$.
Taking into account that $Q_1 =  \mu_1  \Dt \mu_1\inv - \Dt$, we obtain
$F(N,B,\mu_1) = \sp{ \Dt +Q_t, B }$.
But all the operators $\brop{\Dt +Q_t,B}$ are conjugate to $\DtB$ by $\mu_t$ and therefore are isospectral.
Let $\varepsilon>0$ be such that $\DtB$ has no zero eigenvalues in the interval $[-\varepsilon,0)$.
Then $\sp{\Dt +Q_t, B} = \sp{\Dt +Q_t +\varepsilon I_{2N}, B} = 0$ by (P0) because
all the operators $\br{\Dt +Q_t +\varepsilon I_{2N}, B}$ have no zero eigenvalues.
This completes the proof.

\section{Bilinearity}

In the previous sections we have proven that $F$ depends only on the integer numbers 
$N$, $b_1, \ldots, b_m, \mu_1, \ldots, \mu_m$.
Now we will study this dependence more closely.

By $S$ denote the set of all possible $(m+1)$-tuples $\br{N, b_1,\ldots,b_m}$:
\[
S = \set{ \br{N, b_1,\ldots,b_m}\in\Z^{m+1}\colon N\geq 1, 0\leq b_j\leq N }.
\]
$F$ defines the map from $S\times M$ to $\Z$ 
(which we denote by the same letter $F$ for simplicity)
satisfying the following conditions:
\begin{align*}
      F\br{N,\hat{b},\hat{\mu}\oplus\hat{\mu}'} &= F\br{N,\hat{b},\hat{\mu}} + F\br{N,\hat{b},\hat{\mu}'} \\
      F\br{N+N',\hat{b}\oplus\hat{b}',\hat{\mu}} &= F\br{N,\hat{b},\hat{\mu}} + F\br{N',\hat{b}',\hat{\mu }}
\end{align*}
where $\hat{\mu}=\br{\mu_j}_{j=1\ldots m}$, $\hat{b}=\br{b_j}_{j=1\ldots m}$,
symbol $\oplus$ denotes the componentwise addition.
Indeed, the first equality has been proven in section \ref{seq:mu}, while the
second equality is by the property (P3) of the spectral flow.

Hence $F$ is a bilinear function, and therefore there is a homomorphism from
$\Z^{m+1}\otimes M$ to $\Z$ such that $F$ can be represented as the composition
\begin{equation}\label{eq:comp}
S\times M \hookrightarrow \Z^{m+1}\times M \to \Z^{m+1}\otimes M \rightarrow \Z,
\end{equation}
where the first arrow is induced by the natural embedding of $S$
into $\Z^{m+1}$, and the second arrow is the canonical map of the direct product to the tensor product.

\medskip
Let us consider operator \eqref{eq:Qt} with boundary condition \eqref{eq:bc}.
If $\br{\Dir+Q_t}u=0$ and $i\br{\n_1 + i\n_2} u^+ = B u^-$ on $\p X$, then
\begin{multline*}
\int_{\p X} \bra{ B(\x)u^-, u^- }\d s =
\int_{\p X} \bra{ i\br{\n_1+i\n_2}u^+, u^- }\d s = \\
 = \int_X \bra{ \br{-i\br{\p_1+i\p_2}+q_t}u^+, u^- }\d x^1 \d x^2 -
    \int_X \bra{ u^+, \br{-i\br{\p_1-i\p_2}+\overline{q}_t}u^- }\d x^1 \d x^2 = 0,
\end{multline*}
where $\d s$ is the length element on $\p X$.

Suppose now that the sign of $B$ is the same on all boundary components.
Then from the last equality we have ${u^-}\equiv 0$ on $\p X$, $u^+ = -i\br{\n_1-i\n_2}Bu^- \equiv 0$ on $\p X$.
Thus $u\equiv 0$ on $X$ by the weak inner unique continuation property of Dirac operator \cite{BBW-93}.
So $\br{\Dir+Q_t,B}$ has no zero eigenvalues at any $t$ for such B, and by Property (P0) $\sp{\Dir+Q_t,B}=0$.
Finally we obtain $F(1,\hat{0},\hat{\mu}) = F(1,\hat{1},\hat{\mu}) = 0$ at any $\hat{\mu}$,
where we denote $\hat{0} = \br{0,\ldots, \: 0}, \; \hat{1} = \br{1,\ldots,1} \in \Z^m$.

Let us consider the group $M'$ which is quotient of $\Z^{m+1}$ by subgroup spanned by elements 
$\br{1,\hat{0}}, \br{1,\hat{1}} \in \Z^{m+1}$. 
Note that $M'$ coincides with the quotient group ${\Z^{m}}/\bra{\hat{1}}$, so it is naturally
isomorphic to the Abelian group $\Hom(M,\Z)$ of all homomorphisms of $M$ to $\Z$.

By previous arguments, there exists such homomorphism
$\widetilde{F}\colon M'\otimes M \to \Z$ that $F$ is the composition
of the following homomorphisms:
\begin{equation}
  S\times M \hookrightarrow \Z^{m+1}\times M \to \Z^{m+1}\otimes M \to M'\otimes M \stackrel{\widetilde{F}}{\rightarrow} \Z,
\end{equation}
where the first two arrows are the same as in \eqref{eq:comp},
and the third arrow is induced by the natural projection $\Z^{m+1}\to M'$.

\section{Invariance under the action of symmetric group}

Let $\Diff^+(X)$ be the group of all diffeomorphisms of $X$ preserving orientation, $f\in\Diff^+(X)$.
As it was shown in Section \ref{sec:Xg}, $F\br{N,f^{\ast}B, f^{\ast}\mu} = F(N,B,\mu)$,
and hence
$$F\br{N,f^{\ast}\hat{b}, f^{\ast}\hat{\mu}} = F(N,\hat{b},\hat{\mu}),$$
where $f^{\ast}$ acts on $\hat{b}$ and $\hat{\mu}$ by the permutation of the coordinates,
corresponding to the permutation of the boundary components of $X$ by $f$.
It is well known that any permutation of the boundary components of $X$ is realized by some element of $\Diff^+(X)$.
Thus $F(N,\hat{b},\hat{\mu})$ is invariant under the action of
symmetric group $S_m$ (the group of permutations of $m$ elements) on $\br{\hat{b}, \hat{\mu}}$ by the permutations of the coordinates.

All permutations of the coordinates leave he element $\hat{1}$ of $\Z^m$ invariant,
so $S_m$ acts on $M' = {\Z^{m}}/\bra{\hat{1}}$ in exactly the same way,
and $\widetilde{F}$ is invariant under the action of $S_m$, too.

Extending $\widetilde{F}$ by linearity from $M'\otimes M$ to $V'\otimes V$,
$V' = M'\otimes \CC = {\CC^m}/{\bra{\hat{1}}}$,
$V = M\otimes \CC = \set{v\in\CC^m \colon \sum v_j = 0}$,
we obtain homomorphism $\widetilde{F}_{\CC}\colon V'\otimes V \to \CC$,
coinciding with $\widetilde{F}$ on the lattice $M'\otimes M \subset V'\otimes V$.
Obviously, $\widetilde{F}_{\CC}$ is invariant with respect to the action of $S_m$ on $V'\otimes V$ as well.

$V'$ and $\Hom_{\CC}(V,\CC)$ coincide as the representations of $S_m$,
so the vector space of all invariant homomorphisms from $V'\otimes V$ to $\CC$ is isomorphic to the
vector space of all equivariant homomorphisms $V \to V$.
But the latter space is 1-dimensional by Schur's lemma, because $V$ is the irreducible representation of $S_m$ \cite{Fulton}.
So $\widetilde{F}_{\CC}\br{v'\otimes v} = c\sum_j v'_j v_j$ for some constant $c\in\CC$,
and $F(N,\hat{b},\hat{\mu}) = c\sum b_j \mu_j$, where $c$ depends only on $m$.

On the other hand, $F$ is integer-valued and, in particular, $c=F\br{1,(0,1),(-1,1)}\in\Z$.

Finally, we obtain $\sp{\Dt+Q_t, B}\tin = c_m \sumj b_j \mu_j$,
where $c_m$ is the integer constant depending on $m$ only, and Theorem \ref{thm:gDirac} is proved.

\section*{Acknowledgements}
\addtocontents{toc}{\vspace{5mm}}
\addcontsec{Acknowledgements}

The author is very grateful to M.I.~Katsnelson for attracting author's attention to this circle of questions and for explaining that these problems are of potential importance for the condensed matter physics. The latter served as an important motivation for this work. The conversations with M.I.~Katsnelson helped to understand what kind of difficulties a physicist may have while reading a mathematical paper. Hopefully, this led to a more physicist-friendly style of this paper. M.I. Katsnelson also informed the author about the papers \cite{AhmBeen, Berry}. 
The main part of the text was written during author's visit to Radboud University Nijmegen at the invitation of M.I.~Katsnelson and with the financial support from the Stichting voor Fundamenteel Onderzoek der Materie (FOM).

The author thanks M.~Lesch and F.V.~Petrov for their help with subtle questions of the operator theory and the theory of Sobolev spaces.
The author is very grateful to N.V.~Ivanov for his continuous support and his efforts to improve the exposition in this paper. 
The author is also thankful to M.~Braverman, A.~Gorokhovsky, M.E.~Kazaryan, and I.A.~Panin for stimulating discussions of various topics related to this paper.

This work was partially supported by the RFBR grant 09-01-00139-a (Russia),
and by the Program for Basic Research of Mathematical Sciences Branch of Russian Academy of Sciences (project 12-T-1-1003).
It was partially done during author's stay at Max Planck Institute for Mathematics (Bonn, Germany);
the author is  grateful to this institution for the hospitality and the excellent working conditions.

\bigskip
\bigskip
\noindent
\textit{Ural Federal University, Ekaterinburg, Russia}  
\\
\textit{pmf@imm.uran.ru}

\end{document}